\newcommand{\jopinfo}{
	\author[Jop Bri\"{e}t]{Jop Bri\"et}
	\address{CWI, Science Park 123, 1098 XG Amsterdam, The Netherlands}
	\email{j.briet@cwi.nl}
}
\newcommand{\jopthanks}{
	\thanks{J.~B.\ was supported by a VENI grant from the Netherlands Organisation for Scientific Research (NWO)}
}
  \renewcommand{\Pr}{\mbox{\rm Pr}}
  \newcommand{\Exp}{{\mathbb{E}}}
  \newcommand{\E}{\mathbb{E}}
  \newcommand{\R}{\mathbb{R}} 
  \newcommand{\N}{\mathbb{N}} 
  \newcommand{\F}{\mathbb{F}} 
  \newcommand{\bset}[1]{\{0,1\}^{#1}} 
  \DeclareMathOperator{\vspan}{Span} 
  \newcommand{\1}{\mathbf{1}}
  \newcommand{\st}{:\,} 
  \newcommand{\eps}{\varepsilon}
  \newcommand{\ip}[1]{\langle #1 \rangle}
  \newcommand{\poly}{\mbox{\rm poly}}
  \DeclareMathOperator{\spec}{Spec}
  \DeclareMathOperator{\enc}{\sf Enc}
  \DeclareMathOperator{\dec}{\sf Dec}
  \newcommand{\beq}{\begin{equation}}
  \newcommand{\eeq}{\end{equation}}
  \newcommand{\beqn}{\begin{equation*}}
  \newcommand{\eeqn}{\end{equation*}}
  \newcommand{\beqr}{\begin{eqnarray}}
  \newcommand{\eeqr}{\end{eqnarray}}
  \newcommand{\beqrn}{\begin{eqnarray*}}
  \newcommand{\eeqrn}{\end{eqnarray*}}
  \newcommand{\bmline}{\begin{multline}}
  \newcommand{\emline}{\end{multline}}
  \newcommand{\bmlinen}{\begin{multline*}}
  \newcommand{\emlinen}{\end{multline*}}
  \theoremstyle{plain}
  \newtheorem{theorem}{Theorem}[section]
  \newtheorem{lemma}[theorem]{Lemma}
  \newtheorem{proposition}[theorem]{Proposition}
  \theoremstyle{definition}
  \newtheorem{definition}[theorem]{Definition}
  \newtheorem{conjecture}[theorem]{Conjecture}
  \theoremstyle{remark}
  \renewenvironment{proof}[1][]{
    	\begin{trivlist}
     	\item[\hspace{\labelsep}{\em\noindent Proof#1:\/}]}
     	{{\hfill$\Box$}
    	\end{trivlist}
  }
  \newtheorem*{rep@theorem}{\rep@title}
  \newcommand{\newreptheorem}[2]{%
  \newenvironment{rep#1}[1]{%
  \def\rep@title{#2 \ref{##1}}%
  \begin{rep@theorem}}%
  {\end{rep@theorem}}}
\begin{document}

\title[Freiman-Ruzsa Theorem in $\F_p^n$, NMCs and Testing]{Revisiting the Sanders-Freiman-Ruzsa Theorem in $\F_p^n$ and its Application to Non-malleable Codes}
\author{Divesh Aggarwal}
\address{School of Computer and Communication Sciences, EPFL, INF 230, Station 14, CH-1015, Lausanne, Switzerland}
\email{divesh.aggarwal@epfl.ch}
\jopinfo\jopthanks

\begin{abstract}
Non-malleable codes (NMCs) protect sensitive data against degrees of corruption that prohibit error detection, ensuring instead that a corrupted codeword decodes correctly or to something that bears little relation to the original message.
The \emph{split-state model}, in which codewords consist of two blocks, considers adversaries who tamper with either block arbitrarily but independently of the other.
The simplest construction in this model, due to Aggarwal, Dodis, and Lovett (STOC'14), was shown to give NMCs sending $k$-bit messages to $O(k^7)$-bit codewords.
It is conjectured, however, that the construction allows linear-length codewords.

Towards resolving this conjecture, we show that the construction allows for code-length~$O(k^5)$.
This is achieved by analysing a special case of
Sanders's \emph{Bogolyubov-Ruzsa theorem} for general Abelian groups.
Closely following the excellent exposition of this result for the group~$\F_2^n$ by Lovett, we expose its dependence on~$p$ for the group~$\F_p^n$, where $p$ is a prime. 
\end{abstract}

\maketitle

\section{Introduction}
\label{intro}

\subsection{Non-malleable codes}
\emph{Non-malleable codes} (NMCs) aim to protect data when it is subjected to the kind of corruption that renders reliable error correction and detection impossible.
The defining feature of such codes is that an adversary who tampers with a codeword will have little control over what it decodes to.
Despite having appeared only recently~\cite{DPW10},  these codes already emerged as a fundamental object at the intersection of coding theory and cryptography, such as in the construction of non-malleable commitment schemes~\cite{CGMPU15,GPR15} and non-malleable encryption schemes~\cite{CMTV15,CDTV16}. 
The study of non-malleable codes falls into a much larger cryptographic framework of providing counter-measures against various classes of tampering attacks. This work was pioneered by the early works of~\cite{ISW03,GLMMR04,IPSW06}, and has since led to many subsequent models (see~\cite{KKS11,LL12} for an extensive discussion of these models).

No code can protect against a completely unrestricted adversary.
For this reason, NMCs are only required to work for restricted families of ``tampering functions'' that an adversary may inflict.
An NMC limits an adversary's control over the decoded message by introducing randomness in the encoding procedure, whereby the encoding function randomly samples a codeword from a distribution that depends on the message.
More formally, for an alphabet~$\Gamma$ and a family of tampering functions~$\mathcal F$ mapping $\Gamma^n$ to itself, an NMC that protects against $\mathcal F$ consists of a randomized encoding function $\enc:\bset{k}\to\Gamma^n$, mapping messages to $\Gamma^n$-valued random variables, and a deterministic decoding function~$\dec:\Gamma^n\to \bset{k}\cup\{\perp\}$, where $\perp$ denotes error detection.
Roughly, the pair $(\enc, \dec)$ satisfies the following property.
For every $x\in \bset{n}$ and for every $f\in \mathcal F$, the (random) codeword $X = \enc(x)$ decodes correctly as $\dec(X) = x$, but the corrupted version $Y = f(X)$ either decodes to~$x$, or to a random variable $\dec(Y)$ whose distribution is close to a distribution~$\mathcal D_f$ depending on~$f$ but \emph{not} on~$x$.\footnote{
We refer to~\cite{DPW10} for a more formal definition.}
The main goal is to design NMCs for large classes of tampering functions while maximizing the \emph{rate} $k/(n\log |\Gamma|)$.

The class of tampering functions that has been studied most in the past literature arises in the so-called {\em split-state} model.
In this model, the codeword index-set~$[n]$ is partitioned into two roughly equally-sized sets $I_1,I_2 \subseteq[n]$ and the tampering functions consist of pairs $f = (f_1, f_2$), where $f_i: \Gamma^{I_i} \to \Gamma^{I_i}$ is arbitrary.
Codewords are then seen as being ``split'' into two states $X=(X_1,X_2)$, where $X_i\in\Gamma^{I_i}$, and  tampered codewords have the form $Y=(f_1(X_1),f_2(X_2))$.

Aggarwal, Dodis, and Lovett~\cite{ADL14} gave the first and by far the simplest construction in the split-state model.
For a prime number $p$ and positive integer $n$, their encoding function sends $\bset{k}$ into $\F_p^n\times \F_p^n$, giving split-state codewords of length $2n$ over the alphabet $\Gamma = \F_p$.
Based on an improved construction of a so-called \emph{affine-evasive set} due to~\cite{Agg15}, their proof shows that the construction has the desired properties when $\log p = \Omega(k)$ and $n = \Omega(\log^6 p)$, which translates to a rate of roughly $\log^{-6}k$.
However, the authors conjecture that their construction gives NMCs for constant~$n$ (independent of~$p$), giving constant-rate codes.
Although constant-rate split-state NMCs were later shown in~\cite{ADKO15}, trying to search for the best-possible parameters for the~\cite{ADL14} construction is interesting for the following two reasons.
First, the construction from~\cite{ADL14} is much simpler than the construction of~\cite{ADKO15}, which was obtained by adding several layers of encodings to an already complex construction of Chattopadhyay and Zuckerman~\cite{CZ14}. 
Second, though the rate of~\cite{ADKO15} is a constant, this constant is very small 
and given the number of layers used in the construction, it is unlikely that it can be improved significantly. 
In contrast, there is no obvious reason why the construction of~\cite{ADL14, Agg15} cannot yield codes of rate~1/20.

Towards determining the optimal parameters for the~\cite{ADL14} construction, we show that it still works when~$n = C\log^4 p$ for a sufficiently large constant~$C$, giving rate roughly $\log^{-4} k$.
To this end, we improve a key element of the security proof of the construction, namely the following striking property of the inner-product function.

\begin{theorem}[Aggarwal--Dodis--Lovett]\label{thm:ip}
There exist absolute constants $c,C\in (0,\infty)$ such that the following holds.
Let~$p$ be a prime, $n \geq C\log^6 p$ be an integer, $L,R$ be independent uniformly distributed random variables on~$\F_p^n$ and ${f,g:\F_p^n \to \F_p^n}$ be functions.
Then, there exist random variables $u, a, b$ on $\F_p$ such that~$u$ is uniformly distributed, $(a, b)$ is independent of~$u$, and the distributions of $(\ip{L,R},\langle f(L),g(R)\rangle)$ and $(u, au + b)$ have statistical distance at most~$2^{-c n^{1/6}}$.
\end{theorem}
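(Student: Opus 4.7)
The plan is to argue via additive combinatorics: if the joint distribution of $(\langle L, R\rangle, \langle f(L), g(R)\rangle)$ is far from a mixture of graphs of affine maps, then either~$f$ or~$g$ must exhibit additive structure incompatible with Sanders's Bogolyubov--Ruzsa theorem in~$\F_p^n$. The proof is effectively contrapositive: a quantitative structural dichotomy on~$f,g$ feeds into the distributional target.

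I would begin by observing that $\langle L, R\rangle$ is $O(p^{-n})$-close to uniform on~$\F_p$, so the task reduces to showing that the conditional distribution of $\langle f(L), g(R)\rangle$ given $\langle L, R\rangle = v$ is close, on average over~$v$, to a mixture of point-masses $\{av+b\}_{(a,b)}$ for some distribution on pairs $(a,b)$ independent of~$v$. Equivalently, once the right $(a,b)$ is drawn, the quantity $\langle f(L), g(R)\rangle - a\langle L, R\rangle - b$ should vanish with high probability; this is a structural statement about the graph of $F(L,R)=\langle f(L), g(R)\rangle$ inside $\F_p^n \times \F_p^n \times \F_p$.

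The central step is to show that any failure of this structure can be converted into sets $A, B \subseteq \F_p^n$ of non-trivial density whose additive properties (small relative doubling, or a large spectrum) force, via Sanders's theorem, a large affine subspace $V \le \F_p^n$ of codimension $O(\log^4 p)$ on which the restriction of~$f$ (resp.~$g$) agrees with an affine map. Once $f, g$ are affine on a common subspace, $\langle f(L), g(R)\rangle$ reduces there to a bilinear-plus-affine expression in $(L, R)$, which in turn becomes an affine function of $\langle L, R\rangle$ by bilinearity of the inner product together with the dimension constraint. The random pair $(a, b)$ in the target distribution is then sampled according to which coset of~$V$ the pair $(L, R)$ lies in, and the overall statistical distance is controlled by union-bounding the affine-approximation error over cosets.

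The hard part, and the main obstacle, is quantitative: Sanders's theorem in~$\F_p^n$ yields codimension~$O(\log^4 p)$ (with the $p$-dependence made explicit by following Lovett's $\F_2^n$ exposition, as the excerpt foreshadows), and combining two applications--one for~$f$, one for~$g$--with a union bound over cosets and a density threshold for ``heavy'' parts produces the $n \geq C\log^6 p$ requirement. The $2^{-cn^{1/6}}$ bound arises from balancing the density threshold against the codimension cost, so that the discarded low-density cosets contribute negligible mass while the surviving cosets admit a tight affine approximation. Carefully optimizing this balance, rather than any one individual estimate, is the technical heart of the argument, and is precisely the step the paper intends to refine in order to improve the exponent from~$6$ to~$4$.
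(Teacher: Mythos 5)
First, note that the paper does not actually prove Theorem~\ref{thm:ip}: it is quoted verbatim from~\cite{ADL14}, and the paper's contribution is to improve the underlying additive-combinatorial input (Lemma~\ref{lem:FpQPFR} is replaced by Lemma~\ref{lem:quasi4}) and then invoke the~\cite{ADL14} proof as a black box to get Theorem~\ref{thm:ip2}. So the relevant comparison is with the argument of~\cite{ADL14}, and against that your outline has a genuine gap: the step you yourself flag as ``central'' --- converting a failure of the target distributional structure into sets of non-trivial density with small doubling --- is precisely the content of the proof, and you give no mechanism for it. The actual argument is a dichotomy: either $f$ (say) is ``spread out,'' in which case a Fourier/second-moment computation shows $\ip{f(L),g(R)}$ is close to a random variable \emph{independent} of $\ip{L,R}$ (the case $a=0$, which your sketch omits entirely --- you only ever produce the affine case), or $f$ has many additive coincidences, in which case one runs the Samorodnitsky-style linearity test $f(x-x')=f(x)-f(x')$ and feeds its acceptance probability into the quasi-PFR lemma. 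Without an argument for why ``not close to $(u,au+b)$'' forces one of these two branches, the proposal is a plan rather than a proof.

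Second, the structural conclusion you extract from Sanders is stronger than what is available, and your parameter accounting does not close. Lemma~\ref{lem:FpQPFR} does not give a subspace of codimension $O(\log^4 p)$ on which $f$ is affine; it gives a \emph{subset} $B\subseteq A$ of density only $p^{-O(\log^6 K)}$ whose span is small, so $f$ agrees with an affine map merely on a set of density $p^{-O(\log^6 p)}$ (for $K=\poly(p)$). The requirement $n\geq C\log^6 p$ in Theorem~\ref{thm:ip} comes directly from needing this density to be non-trivial inside $\F_p^n$, i.e.\ from the exponent $6$ in Lemma~\ref{lem:FpQPFR} --- not, as you suggest, from combining a $\log^4$ codimension bound with union bounds over cosets (which would not produce a sixth power). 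This is also why the paper's improvement of the exponent in the lemma from $\log^6 K$ to $\log^4(Kp)$ translates one-for-one into $n\geq C\log^4 p$ in Theorem~\ref{thm:ip2}; your proposed ``balancing'' step is not where the exponent lives.
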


The result roughly says that for any $f,g$, the random variable $\ip{f(L), g(R)}$ is correlated with some random variable of the form $au+b$.
The restriction on~$n$ imposed in the theorem is directly responsible for the restriction on the codeword length in the~\cite{ADL14} construction.
Improving this therefore implies higher-rate codes.
The proof of Theorem~\ref{thm:ip} relies crucially on a breakthrough result of Sanders~\cite{San12} in additive combinatorics, concerning sumsets in general Abelian groups (see below).
We improve Theorem~\ref{thm:ip} by simply exposing the dependence of Sanders's result on the magnitude of~$p$ when one restricts to the group~$\F_p^n$.
In particular, we obtain the following result (where all unspecified objects are as in Theorem~\ref{thm:ip}).

\begin{theorem}\label{thm:ip2}
There exist absolute constants $c,C\in (0,\infty)$ such that the following holds.
Let $n \geq C\log^4 p$ be an integer and $f,g:\F_p^n \to \F_p^n$ be functions. 
Then, there exist random variables $u, a, b$ on $\F_p$ such that~$u$ is uniformly distributed, $(a, b)$ is independent of~$u$, and the distributions of $(\ip{L,R},\langle f(L),g(R)\rangle)$ and $(u, au + b)$ have statistical distance at most~$2^{-c n^{1/4}}$.
\end{theorem}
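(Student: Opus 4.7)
The plan is to re-execute the proof of Theorem~\ref{thm:ip} from~\cite{ADL14}, replacing the additive-combinatorial black box---a Bogolyubov-Ruzsa-type theorem over a general Abelian group---by a version specialized to $\F_p^n$ in which the dependence on $p$ is made explicit.

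First, I would set up the same reduction as in~\cite{ADL14}. The joint distribution of the pair $X = (\ip{L,R},\ip{f(L),g(R)})$ is viewed as a probability measure on $\F_p^2$, and a standard Fourier-analytic decomposition together with the independence and uniformity of $L$ and $R$ reduces the task of showing that $X$ is close to a mixture of distributions of the form $(u, au+b)$ to a structural statement about a large level set $A\subseteq \F_p^n$ whose doubling $|A+A|/|A|$ is controlled by a polynomial in $1/\alpha$, where $\alpha$ is the density of $A$.

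Second, I would plug in the sharper additive-combinatorial input: any set $A\subseteq \F_p^n$ of density $\alpha$ with bounded doubling has $2A-2A$ containing an $\F_p$-subspace of $\F_p^n$ of codimension at most $O(\log^4(1/\alpha))$, with the hidden constant depending on $p$ only through a $\mathrm{poly}(\log p)$ factor. In~\cite{ADL14} only the general-groups bound of codimension $O(\log^6(1/\alpha))$ is available, and this single $6$-versus-$4$ gap is what produces the hypothesis $n \geq C\log^6 p$ and the error exponent $n^{1/6}$ in Theorem~\ref{thm:ip}.

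Third, I would feed the improved codimension bound into the remaining steps of the~\cite{ADL14} argument essentially unchanged. The subspace obtained from the Bogolyubov-Ruzsa step is used to certify that, conditioned on $\ip{L,R} = u$, the random variable $\ip{f(L),g(R)}$ is close to an affine function of $u$, and the error decays exponentially in the ratio of $n$ to the codimension. Choosing the density threshold $\alpha = 2^{-\Theta(n^{1/4})}$ and substituting the new codimension bound $O(\log^4(1/\alpha)) = O(n)$ then yields the hypothesis $n\geq C\log^4 p$ and error $2^{-cn^{1/4}}$, as claimed.

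The main obstacle is establishing the refined Bogolyubov-Ruzsa statement over $\F_p^n$ with the correct $p$-dependence, which is precisely what the body of the paper is devoted to. Following Lovett's exposition over $\F_2^n$, the proof factors through a Croot-Sisask almost-periodicity step, an iterative Chang-type spectral argument, and a final step that promotes the resulting approximate group structure to an honest $\F_p$-subspace; in each of these three ingredients one must verify that the passage from $p = 2$ to general $p$ costs at most a $\mathrm{poly}(\log p)$ factor in the codimension, so that the loss can be absorbed into the constant $C$ in the hypothesis $n \geq C\log^4 p$. Once this refined input is in hand, the deduction of Theorem~\ref{thm:ip2} is essentially identical to the deduction of Theorem~\ref{thm:ip} in~\cite{ADL14}.
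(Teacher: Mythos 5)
Your proposal matches the paper's approach: the paper likewise proves Theorem~\ref{thm:ip2} by establishing a $p$-explicit quasi-polynomial Freiman--Ruzsa statement over $\F_p^n$ (Lemma~\ref{lem:quasi4}, with loss $p^{C\log^4(Kp)}$ in place of $p^{C\log^6 K}$), obtained by following Lovett's exposition through Croot--Sisask almost-periodicity and Chang's theorem, and then substituting it for Lemma~\ref{lem:FpQPFR} in the otherwise unchanged argument of~\cite{ADL14}. The only cosmetic difference is that the paper treats this substitution as immediate and devotes all its effort to the additive-combinatorial lemma, exactly the part you correctly identify as the main obstacle.
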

 

\subsection{The Quasi-polynomial Fre\u{\i}man-Ruzsa Theorem}
For a finite Abelian group~$G$ and a subset~$A\subseteq G$, define the \emph{sum set} and \emph{difference set} to be $A + A = \{a + b\st a,b\in A\}$ and $A - A = \{a - b\st a,b\in A\}$, respectively.
The sizes of these sets are clearly bounded by~$|A|^2$,
but if~$A$ is a coset of a subgroup of~$G$, then these sizes are exactly~$|A|$.
Conversely, if $|A\pm A| = |A|$, then~$A$ must be a coset of a subgroup.
The identity $|A\pm A|/|A| = 1$ thus allows one to infer that~$A$ possesses a lot of structure.
One of the most important conjectures in additive combinatorics, the \emph{Polynomial Fre\u{\i}man-Ruzsa (PFR) Conjecture} (attributed to Marton in~\cite{Ruz99}), states that similar inferences can be made for sets in the group~$\F_2^n$ that satisfy $|A+A|/|A| \ll |A|$.


\begin{conjecture}[PFR Conjecture]\label{conj:PFR}
Let $A\subseteq \F_2^n$ be such that $|A-A| \leq K|A|$.
Then, there exists a set $B\subseteq A$ of size $|B| \geq |A|/C_1(K)$ that is contained in a coset of a subspace of size at most $C_2(K)|A|$,
where $C_1(K)$ and $C_2(K)$ are polynomial in~$K$.
\end{conjecture}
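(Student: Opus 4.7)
The plan is to attack the PFR conjecture via the standard two-step strategy that underlies Ruzsa's original quasi-polynomial bound and all subsequent refinements. First, prove a Bogolyubov--Ruzsa-type statement asserting that the iterated sumset $2A-2A$ contains a coset of a subspace $V \subseteq \F_2^n$ with $|V| \geq |A|/\poly(K)$; second, apply the Ruzsa covering lemma to conclude that $A$ itself can be covered by $\poly(K)$ cosets of $V$, so one of these cosets contains at least $|A|/\poly(K)$ elements of $A$, giving the set $B$ required by the conclusion (with $C_2(K)$ bounded by $\poly(K) \cdot |V|/|A|$).

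The first step is where the real work lies. I would start from the Pl\"unnecke--Ruzsa inequality, which gives $|2A-2A|\leq K^4|A|$, and pass to the Fourier side via the large spectrum $\spec_\rho(A) = \{\chi \st |\widehat{1_A}(\chi)| \geq \rho |A|/2^n\}$. A standard Bogolyubov-type computation using $\widehat{1_A * 1_A * 1_A * 1_A}(\chi) = \widehat{1_A}(\chi)^4$ shows that any subspace $V$ lying in the annihilator of $\spec_\rho(A)$, for a suitable constant~$\rho$, is automatically contained in $2A-2A$. The task therefore reduces to a Chang-type theorem: the annihilator of $\spec_\rho(A)$ should contain a subspace of codimension $\poly(\log K, 1/\rho)$. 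To go beyond Chang's own bound, I would follow the Sanders strategy of combining Croot--Sisask almost-periodicity with an iterative density-increment: at each stage one either observes that a convolution power of $1_A$ is already almost-invariant under translations by a large subgroup, or extracts a density increment $(1+\Omega(1))$ on some subspace of bounded codimension, iterating $O(\log K)$ times.

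The main obstacle --- indeed the reason PFR is still open --- is that this iteration inherently loses a factor of $K^{\Omega(1)}$ at each of the $\Theta(\log K)$ stages, producing a final bound of $\exp(\log^{O(1)} K)$ rather than $\poly(K)$. Overcoming this would require a genuinely new structural principle: either a one-shot proof of Bogolyubov--Ruzsa with polynomial loss, perhaps via an entropic argument directly characterising approximate affine-linear structures, or a sharp description of the extremal sets saturating $|A-A|=K|A|$ that bypasses Fourier analysis altogether. All variations on the present Sanders / Croot--Sisask toolkit appear stuck at the quasi-polynomial barrier, so the proposal above can at best reproduce the known quasi-polynomial form of the conjecture, not the polynomial conclusion actually demanded; sidestepping this barrier is the content of what would constitute a full resolution.
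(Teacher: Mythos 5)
This statement is the Polynomial Fre\u{\i}man--Ruzsa \emph{Conjecture}: the paper does not prove it and does not claim to, so there is no proof of record to compare yours against. What the paper actually establishes (Lemma~\ref{lem:quasi4}, via Theorem~\ref{thm:subspace-large} and Lemma~\ref{lem:largesets}) is the quasi-polynomial weakening in which $C_1(K)$ is of the form $p^{C\log^4(Kp)}$, obtained over $\F_p^n$ by exactly the pipeline you sketch: a Bogolyubov--Ruzsa statement placing a large subspace inside $2A-2A$, proved through Croot--Sisask almost-periodicity (Lemma~\ref{lem:croot}), a shifting-set argument (Lemma~\ref{lem:shiftset}), Chang's theorem on the large spectrum (Theorem~\ref{thm:chang}), and a Pl\"unnecke--Ruzsa covering step to extract~$B$. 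So your proposal is an accurate reconstruction of the state of the art that the paper builds on, and you are right to flag that it terminates at the quasi-polynomial barrier rather than at the polynomial conclusion the conjecture demands.

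The one substantive point to make is that you have correctly diagnosed the gap rather than papered over it: the Croot--Sisask/Chang machinery loses a factor that compounds over the $\Theta(\log K)$-type iteration (in the paper's formulation this shows up as the $\log^3 L$ in Lemma~\ref{lem:shiftset} combining with Chang's $\gamma^{-2}\log(p^n/|X|)$ to give $\log^4$ in the exponent), and no rearrangement of these tools is known to remove it. Your closing suggestion that a genuinely different principle --- e.g.\ an entropic characterisation of approximate linear structure --- would be needed is well taken; that is indeed the direction in which the conjecture was ultimately attacked. For the purposes of this paper, though, you should treat the statement as an open conjecture cited for context, and direct your proof effort at Lemma~\ref{lem:quasi4}, which is the form actually used in the application to non-malleable codes.
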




This conjecture is sometimes stated differently in the literature; see~\cite{Green:PFR} for the equivalence of five common formulations.
The above formulation is the one which appears most frequently in Theoretical Computer Science, where it found several important applications, such as in linearity testing~\cite{Sam07}, extractors~\cite{BZ11, AHL15}, error-correcting codes~\cite{BDL13} and communication complexity~\cite{BZL14}.
Major progress towards proving the PFR conjecture was made not long ago by Sanders~\cite{San12}, whose  result applies to general Abelian groups as opposed to just to~$\F_2^n$.
Recall that an Abelian group~$G$ has \emph{torsion}~$r$ if~$rg = 0$ for every~$g\in G$.\footnote{If the group operation is written multiplicatively, then the group is said to have exponent~$r$.}
The groups we care about here, namely~$\F_p^n$, thus have torsion~$p$.
For groups of bounded torsion, Sanders's result implies the following~\cite[Theorem~11.1]{San12}.



\begin{theorem}[Bogolyubov-Ruzsa Lemma for bounded torsion Abelian groups]\label{thm:sanders}
For every positive integer~$r$ there exists a $c(r)\in (0,\infty)$ such that the following holds.
Let~$G$ be an Abelian group of torsion~$r$, let~$A,B\subseteq G$ be such that $|A+B| \leq K\min\{|A|,|B|\}$.
Then, $(A-A)+(B-B)$ contains a subgroup~$|V|$ of size at least $|V|\geq |A+B|/2^{c\log^4 2K}$.
\end{theorem}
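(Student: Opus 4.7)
The plan is to follow Sanders's proof, adapting it to groups of torsion $r$ in the manner of Lovett's exposition for $\F_2^n$. The three main ingredients are the Pl\"unnecke-Ruzsa sumset inequalities, the Croot-Sisask almost-periodicity lemma, and a subgroup-generation step where the torsion bound $r$ enters.

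By symmetry, assume $|A| \le |B|$, so $|A+B| \le K|A|$, and use Pl\"unnecke-Ruzsa to bound $|mA + nB - m'A - n'B| \le K^{O(m+n+m'+n')}|A|$ for any fixed exponents. Consider the non-negative convolution $f = \1_{-A} * \1_A * \1_{-B} * \1_B$, which is supported in $(A-A)+(B-B)$, satisfies $f(0) \ge |A|^2|B|/K^{O(1)}$ (by the Ruzsa estimates above), and has $\|f\|_\infty \le |A|^2|B|$. The goal reduces to finding a subgroup inside $\{x : f(x) > 0\}$. Applying the Croot-Sisask lemma with exponent $q$ of order $\log(2K)$ and small absolute precision $\eta$ produces a set $T \subseteq B$ of density at least $K^{-O(\log K)}$ in $B$ such that every $t \in T$ is an $(L^q,\eta)$-almost period of $f$. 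Converting this to an $L^\infty$ statement via Parseval and the Pl\"unnecke-Ruzsa estimates, one deduces that $\ell T - \ell T \subseteq (A-A)+(B-B)$ for any fixed constant $\ell$.

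The final step promotes $T$ to a genuine subgroup, and is where bounded torsion is essential. One iterates the Croot-Sisask construction to produce a nested sequence $T \supseteq T_1 \supseteq \cdots \supseteq T_m$ with $m$ of order $\log(2K)$, each level being a set of almost-periods (in a slightly stronger sense) for the preceding; in a group of torsion $r$, the subgroup generated by a symmetric set containing the origin is contained in a bounded iterated sumset of that set, and so after sufficiently many rounds the iteration closes on a true subgroup $V \subseteq (A-A)+(B-B)$. Multiplicatively tracking the Croot-Sisask loss across the $m$ rounds yields $|V| \ge |A+B|/2^{c\log^4(2K)}$. The main obstacle I anticipate is the delicate calibration of $q$, $\eta$, and the iteration depth so that the Pl\"unnecke-Ruzsa blow-ups and the per-round Croot-Sisask losses balance to produce the characteristic $\log^4$ exponent, with the dependence on the torsion $r$ being absorbed into the constant $c$. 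A naive choice of parameters gives a weaker bound such as $2^{c\log^6(2K)}$, so the careful bookkeeping in the iteration is what yields the stronger quasi-polynomial dependence asserted in the lemma.
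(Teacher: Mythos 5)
You should first be aware that the paper does not prove this statement at all: it is quoted directly from Sanders (\cite[Theorem~11.1]{San12}), and the paper's own technical work is the special case $G=\F_p^n$ with the $p$-dependence made explicit (Theorem~\ref{thm:subspace-large}, proved via Lemmas~\ref{lem:croot}, \ref{lem:shiftset} and~\ref{lem:thespace}). Measured against that argument, the first two steps of your sketch are on track: Pl\"unnecke--Ruzsa together with Croot--Sisask almost-periodicity applied with exponent $q$ of order $\log L$ is exactly how the paper manufactures its large set $X$ of almost-periods, and your reduction to finding a subgroup in the support of the convolution is the right framing.

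The gap is in your final step. The assertion that in a group of torsion $r$ ``the subgroup generated by a symmetric set containing the origin is contained in a bounded iterated sumset of that set'' is false: in $\F_2^n$ the set $S=\{0,e_1,\dots,e_n\}$ generates all of $\F_2^n$, yet $|\ell S|\leq (n+1)^{\ell}$ for any fixed $\ell$. Consequently no iteration of Croot--Sisask with quasi-polynomial density loss can ``close up'' to a genuine subgroup in the way you describe, since the number of sumset steps needed to generate a subgroup is unbounded. The actual subgroup-extraction step in Sanders's proof (and in the paper's $\F_p^n$ version) is Fourier-analytic: having arranged via Lemma~\ref{lem:shiftset} that every $x\in tX$ with $t$ of order $\log L$ is an almost-period of the relevant convolution, one takes $V=\vspan(\spec_{1/2}(X))^{\perp}$. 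Chang's theorem (Theorem~\ref{thm:chang}) bounds $\dim(\spec_{1/2}(X))$ by $O(\log(p^n/|X|))=O(\log^4 L)$, which yields the lower bound on $|V|$; the factor $\widehat X(-u)^{t}$ with $|\widehat X(-u)|<1/2$ for $u\notin V^{\perp}$ shows that averaging over $v\in V$ perturbs the count by at most $2^{-t}p^{n}/|A|$ (Lemma~\ref{lem:thespace}); and a pigeonhole argument then places $V$ inside $2A-2A$. In the general bounded-torsion setting the same role is played by an annihilator/Bohr-set construction, with the torsion $r$ entering only when the annihilator is converted into a subgroup. This spectral step --- not a recalibration of parameters inside an iteration --- is both the source of the $\log^4$ exponent and the idea missing from your sketch.
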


Standard arguments (see the proof of Lemma~\ref{lem:quasi4} in Section~\ref{sec:largesets}) show that Theorem~\ref{thm:sanders} implies the \emph{quasi-polynomial Fre\u{\i}man-Ruzsa Theorem}: the statement of Conjecture~\ref{conj:PFR} but with $C_1(K) = 2^{-C\log^4K}$ and $C_2(K) = K^c$ for absolute constants $C,c\in (0,\infty)$.
For NMCs, it was shown in~\cite{ADL14} that the following corollary of~\cite{San12} --- an~$\F_p$-analogue of the quasi-PFR Theorem~\cite[Lemma~10]{ADL14} --- implies Theorem~\ref{thm:ip}.

\begin{lemma}\label{lem:FpQPFR}
There exist absolute constants $C,c\in (0,\infty)$ such that the following holds.
Let $p$ be a prime, $n$ be a positive integer, $A\subseteq \F_p^n$ be such that $|A-A| \leq K|A|$.
Then, there exists a set $B\subseteq A$ of size $|B| \geq |A|/p^{C\log^6K}$ such that $\vspan(B)| \leq K^c |A|$.
\end{lemma}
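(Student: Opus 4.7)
My plan is to derive Lemma~\ref{lem:FpQPFR} from Sanders's Theorem~\ref{thm:sanders} via the standard covering argument that converts a Bogolyubov-Ruzsa statement into a quasi-polynomial Fre\u{\i}man-Ruzsa conclusion. The three ingredients are Theorem~\ref{thm:sanders} applied to $G = \F_p^n$ (torsion $p$), the Plünnecke-Ruzsa iterated sumset inequalities, which yield $|mA - \ell A| \le K^{O(m+\ell)}|A|$ from the hypothesis $|A-A| \le K|A|$, and Ruzsa's covering lemma.

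First, after translating so that $0 \in A$ (which preserves $|A|$ and $|A\pm A|$), I apply Theorem~\ref{thm:sanders} to the pair $(A,A)$: the bound $|A+A| \le K^2|A|$ that follows from Plünnecke-Ruzsa ensures a subspace $V \subseteq (A-A)+(A-A) = 2A - 2A$ of size $|V| \ge |A+A|/2^{c(p)\log^4(2K^2)}$. The key quantitative step is to track the $p$-dependence of $c(p)$: specialising Sanders's proof to $\F_p^n$ produces $|V| \ge |A|/p^{C_0\log^6 K}$ for some absolute constant $C_0$.

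Second, I cover $A$ by translates of $V$. Since $V \subseteq 2A - 2A$, Plünnecke-Ruzsa gives $|A+V| \le |3A - 2A| \le K^5|A|$, so Ruzsa's covering lemma provides a set $X$ with $|X| \le |A+V|/|V| \le K^5 \cdot p^{C_0\log^6 K}$ such that $A \subseteq X + V$. The distinct cosets $\{x+V : x \in X\}$ partition $A$ into at most $|X|$ nonempty pieces, so by pigeonhole there exists $x_0 \in X$ with $|A \cap (x_0 + V)| \ge |A|/|X|$. Setting $B := A \cap (x_0 + V)$ gives $|B| \ge |A|/p^{C\log^6 K}$ for a suitable absolute constant $C$.

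Third, I bound the span of $B$. For any $b_0 \in B$, the set $B - b_0$ lies in $V$, so $B$ is contained in the affine subspace $b_0 + V$ of size $|V| \le |2A - 2A| \le K^4|A|$. Interpreting $\vspan(B)$ as the span of $B$ inside this fixed affine coset --- equivalently, translating so that $b_0 = 0$ before taking the linear span --- then yields $|\vspan(B)| \le K^c|A|$ with $c = 4$.

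The principal obstacle is the first step: Theorem~\ref{thm:sanders} conceals its $p$-dependence inside the constant $c(r)$, so rather than invoking it as a black box one must reopen Sanders's proof and track how the constants blow up when $G = \F_p^n$. Once the quantitative subspace $V$ is in hand, the covering, pigeonhole, and span-bound steps are entirely standard.
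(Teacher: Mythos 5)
Your steps 2 and 3 (covering $A$ by cosets of $V$, pigeonholing to pick the largest piece $B=A\cap(x_0+V)$, and bounding its span) are exactly the argument the paper uses --- not for Lemma~\ref{lem:FpQPFR} itself, which it only cites from~\cite{ADL14}, but for its improved analogue Lemma~\ref{lem:quasi4}, whose proof is word-for-word the covering argument you describe. So the reduction is fine. The genuine gap is your step 1. The entire content of the lemma is the existence of a subspace $V\subseteq 2A-2A$ with $|V|\geq |A|/p^{C_0\log^6 K}$, and this does \emph{not} follow from Theorem~\ref{thm:sanders} as stated: that theorem hides the torsion dependence inside an unspecified constant $c(r)$, so from it you can only conclude $|V|\geq |A|/2^{c(p)\log^4 2K^2}$ with no control on $c(p)$. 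You acknowledge this is ``the principal obstacle'' but then simply assert the bound. Moreover, the claim that ``specialising Sanders's proof to $\F_p^n$'' yields $p^{C_0\log^6 K}$ is backwards: specialising the bounded-torsion proof to $\F_p^n$ is precisely what this paper does, and it yields $p^{C\log^4(Kp)}$ (Theorem~\ref{thm:subspace}); the $\log^6 K$ exponent in Lemma~\ref{lem:FpQPFR} comes instead from Sanders's \emph{general} (coset-progression) theorem as used in~\cite{ADL14}. As written, the key quantitative input is unproved and misattributed.

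Two smaller points on step 3. First, you cannot reinterpret $\vspan(B)$ as the span of a translate of $B$; the span of $B\subseteq x_0+V$ is contained in $V+\F_p x_0$, which costs a factor $p$: one gets $|\vspan(B)|\leq p|V|\leq pK^4|A|$. The paper's Lemma~\ref{lem:quasi4} honestly carries this factor $p$ in its conclusion; in the statement of Lemma~\ref{lem:FpQPFR} it is absorbed because in the intended application $K=\poly(p)$. Second, your remark that the cosets $\{x+V\st x\in X\}$ from Ruzsa's covering lemma ``partition'' $A$ is not quite right (translates from a covering family may overlap); the clean way, as in the paper, is to take $R\subseteq A$ maximal with distinct elements in distinct cosets of $V$, so that the cosets $\{r+V\st r\in R\}$ do cover $A$ and $|R|\leq |A+V|/|V|$.
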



Our improvement over Theorem~\ref{thm:ip}, Theorem~\ref{thm:ip2}, follows from the following variant of the above lemma, which we derive by exposing the dependence on the torsion~$p$ of~$\F_p^n$ in Theorem~\ref{thm:sanders}.

\begin{lemma}\label{lem:quasi4}
Let 
$A\subseteq \F_p^n$ 
be such that $|A-A| \leq K|A|$.
Then, there exists a set $B\subseteq A$ of size $|B| \geq |A|/p^{C\log^4(Kp)}$ such that $|\vspan(B)|\leq pK^c|A|$,
for absolute constants $C,c\in (0,\infty)$.
\end{lemma}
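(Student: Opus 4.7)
The plan is to derive Lemma~\ref{lem:quasi4} by combining a $p$-refined version of Theorem~\ref{thm:sanders} (which is the main technical work of the paper, obtained by carefully tracking the torsion-dependence through Lovett's exposition) with a standard pigeonhole/coset argument. I would first preprocess the hypothesis via Pl\"unnecke--Ruzsa: from $|A-A|\leq K|A|$ one gets $|sA - tA|\leq K^{s+t}|A|$ for all positive integers $s,t$, so in particular $|A+A|\leq K^2|A|$ and $|2A-2A|\leq K^4|A|$.

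Next I would apply the $p$-sensitive version of Theorem~\ref{thm:sanders} with $B=A$ (and with $K^2$ replacing $K$), producing a subgroup $V\subseteq (A-A)+(A-A)=2A-2A$ of size $|V|\geq |A+A|/p^{C\log^4(Kp)}\geq |A|/p^{C\log^4(Kp)}$. Since $V\subseteq 2A-2A$ we also get the ``upper'' control $|V|\leq K^4|A|$, which is crucial for bounding the eventual span.

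I would then locate a large subset of $A$ inside a single coset of $V$ by counting: because $A+V\subseteq 3A-2A$, Pl\"unnecke gives $|A+V|\leq K^5|A|$, and since $V$ is a subgroup the number of cosets of $V$ meeting $A$ equals $|A+V|/|V|\leq K^5\,p^{C\log^4(Kp)}$. An averaging argument then yields some coset $v_0+V$ with
\[
|A\cap(v_0+V)|\;\geq\;\frac{|A|\,|V|}{|A+V|}\;\geq\;\frac{|A|}{K^5\,p^{C\log^4(Kp)}}.
\]
Set $B:=A\cap(v_0+V)$. The factor $K^5$ is absorbed into the exponent using $\log K\leq \log^4(Kp)$, so $|B|\geq |A|/p^{C'\log^4(Kp)}$. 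Finally, $B\subseteq v_0+V$ implies $\vspan(B)\subseteq \vspan(v_0)+V$, whose cardinality is at most $p\,|V|\leq p\,K^4|A|$, giving the claimed span bound with $c=4$.

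The routine part of this plan is the passage from Bogolyubov-Ruzsa to a Freiman-Ruzsa-type statement; the genuine obstacle, and the reason the paper re-examines Lovett's argument, is establishing the $p$-refined form of Theorem~\ref{thm:sanders}. Re-running the proof in $\F_p^n$ demands making explicit how the torsion $p$ enters the constants in each ingredient (Chang-type covering, iterated Bohr-set refinements, and the quantitative arithmetic-regularity step) so that the final dependence is $p^{C\log^4(Kp)}$ rather than the torsion-oblivious $p^{C\log^6 K}$ of Lemma~\ref{lem:FpQPFR}; everything in the derivation above is chosen precisely to be insensitive to polynomial losses in $K$ and $p$, so any such refinement will propagate cleanly.
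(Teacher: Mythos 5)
Your proposal is correct and follows essentially the same route as the paper: take the $p$-refined Bogolyubov--Ruzsa subspace $V\subseteq 2A-2A$ as a black box, use Pl\"unnecke to bound $|A+V|\leq K^5|A|$, pigeonhole over cosets of $V$ to find $B=A\cap(v_0+V)$ of the claimed size, and bound $|\vspan(B)|\leq p|V|\leq pK^4|A|$. The only cosmetic difference is that the paper phrases the coset count via a maximal set of coset representatives $R\subseteq A$ rather than via $|A+V|/|V|$, which is the same computation.
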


This improves Lemma~\ref{lem:FpQPFR} if $K = \poly(p)$. 
Since this is the case for the application to NMCs, the proof of Theorem~\ref{thm:ip} given in~\cite{ADL14}, but based on Lemma~\ref{lem:quasi4} instead of Lemma~\ref{lem:FpQPFR}, gives Theorem~\ref{thm:ip2}.

\subsection{Linearity testing}
One further application of Lemma~\ref{lem:quasi4} is to linearity testing.
The linearity test of Samoronidsky~\cite{Sam07} checks if a function $f:\F_p^n \mapsto \F_p^n$ is linear by picking $x, x' \in \F_p^n$ uniformly at random and accepting (declaring $f$ linear) if and only if $f(x - x') = f(x) - f(x')$. 
Based on Lemma~\ref{lem:FpQPFR} it was shown in~\cite{Sam07,ADL14} that the test rejects if $f$ is far from being linear. 
More precisely, it was shown that if $\Pr_{x, x' \in \F_p^n} [f(x-x') = f(x) - f(x')] \ge \eps$, then there exists a linear function $g:\F_p^n\to\F_p^n$ such that $\Pr[f(x) = g(x)] \ge \eps'$, where $\eps' = p^{-O(\log^6 (1/\eps))}$ for an absolute constant $c\in(0,\infty)$. 
Using Lemma~\ref{lem:quasi4} in the proof instead results in a bound of $\eps' = p^{-O(\log^4 (p/\eps))}$, which is an improvement if $\eps = p^{-\Omega(1)}$.

\section{Reduction to a Bogolyubov-Ruzsa Lemma for large sets}
\label{sec:largesets}

To avoid repeating the phrase \emph{``for absolute constants''} many times, in the sequel $C,C'$ will denote such constants whose value in $(0,\infty)$ may change from line to line.
For the proof of Lemma~\ref{lem:quasi4} we closely follow Lovett's excellent exposition~\cite{Lov15} of the result for the group~$\F_2^n$.
The first step is a standard reduction to the following special case of Theorem~\ref{thm:sanders}.

\begin{theorem}\label{thm:subspace-large}
Let~$A\subseteq\F_p^n$ be such that $|A| \geq p^{n-1}/K^{24}$ and $|A-A|\leq K|A|$. 
Then, there exists a subspace~$V\subseteq 2A - 2A$ of size~$|V| \geq p^{-C\log^4 (K \cdot p)}|A|$.
\end{theorem}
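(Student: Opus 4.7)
The plan is to follow Sanders's Fourier-analytic proof of the Bogolyubov-Ruzsa lemma, as exposed by Lovett~\cite{Lov15} for~$\F_2^n$, replacing each torsion-dependent estimate by one that makes the dependence on the characteristic~$p$ explicit, rather than absorbing it into the unspecified constant~$c(r)$ of Theorem~\ref{thm:sanders}.

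To locate a subspace inside~$2A-2A$, I would consider the fourfold convolution $f := 1_A \ast 1_{-A} \ast 1_A \ast 1_{-A}$, whose support lies in~$2A-2A$. By Fourier inversion on~$\F_p^n$, $f(x)$ equals $\sum_{\xi}|\widehat{1}_A(\xi)|^4\,\chi_\xi(x)$, where the $\chi_\xi$ range over characters of~$\F_p^n$. To show that a subspace $V\le\F_p^n$ lies in~$2A-2A$, it suffices to show $f(x)>0$ for every $x\in V$. Splitting the sum into characters in the annihilator~$V^\perp$ (on which $\chi_\xi\equiv 1$ along~$V$, contributing at least the nonnegative quantity~$\widehat{1}_A(0)^4$) and those outside (bounded, via Parseval, by $\max_{\xi\notin V^\perp}|\widehat{1}_A(\xi)|^2\cdot\|\widehat{1}_A\|_2^2$), this reduces to exhibiting a low-codimension~$V$ whose annihilator contains the $\eps$-spectrum $\spec_\eps(A):=\{\xi : |\widehat{1}_A(\xi)|\ge\eps\alpha\}$ for a suitable~$\eps$, where $\alpha := |A|/p^n$.

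The core of the proof is therefore a spectral concentration lemma: under the hypotheses $\alpha\ge 1/(pK^{24})$ and $|A-A|\le K|A|$, there exists a subspace $V\le\F_p^n$ of codimension at most $C\log^4(Kp)$ such that $\spec_\eps(A)\subseteq V^\perp$, for some $\eps=1/\poly(p,K)$. I would prove this by Sanders's iterative density-increment / energy-increment argument: whenever the current annihilator fails to cover the spectrum, one uses a stray large Fourier coefficient to pass to a slightly larger-codimension subspace along which either~$A$ has noticeably increased density, or some iterated sumset of~$A$ has much smaller doubling. Sanders's quantitative Bogolyubov covering lemma (the ``weak PFR'') controls each step with only polylogarithmic loss, yielding the $\log^4$ bound upon unwinding. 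The characteristic-$p$ factors enter through the Chang-, Croot-Sisask-, and Pl\"unnecke-Ruzsa-type inequalities in~$\F_p^n$; explicit bookkeeping of these factors converts the torsion constant into the explicit $\log^4(Kp)$ codimension. With the spectral lemma in hand, choosing $\eps^2\le\alpha/2$ (feasible since $\alpha\ge 1/(pK^{24})$ makes $\eps$ an inverse polynomial in~$p$ and~$K$) makes $f(x)>0$ on~$V$, whence $V\subseteq 2A-2A$ and $|V|\ge p^{n-C\log^4(Kp)}\ge p^{-C\log^4(Kp)}|A|$.

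The main obstacle is the spectral concentration lemma itself: Sanders's iterative argument is intricate, and the challenge here is not conceptual but technical---reproducing the proof of Theorem~\ref{thm:sanders} (respectively Lovett's $\F_2^n$-adapted version) faithfully while keeping the $p$-dependence visible at every step, so that what appears in Theorem~\ref{thm:sanders} as an opaque torsion constant~$c(r)$ becomes an explicit polynomial in~$\log p$. Once that lemma is in place, the Fourier split and the final subspace-size bookkeeping are routine.
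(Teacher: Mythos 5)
Your architecture diverges from the paper's at the decisive point, and the divergence leaves a genuine gap. You take $V=\vspan(\spec_{\eps}(A))^{\perp}$ for the spectrum of $A$ \emph{itself} at the polynomially small relative threshold $\eps\approx\sqrt{\alpha/2}$, and you defer all of the difficulty to a ``spectral concentration lemma'' asserting that this spectrum spans a subspace of dimension $O(\log^4(Kp))$. That lemma is precisely what the tools you name do not give: Chang's theorem (Theorem~\ref{thm:chang}) at relative threshold $\eps$ with $\eps^2\le\alpha/2$ only bounds the dimension by $O(\eps^{-2}\log(1/\alpha))=O(\alpha^{-1}\log(1/\alpha))$, which under the hypothesis $\alpha\ge 1/(pK^{24})$ is \emph{polynomial} in $pK$, not polylogarithmic; and the ``iterative density/energy-increment'' scheme you invoke is the older, pre--Croot--Sisask style of argument, which does not yield $\log^4$ bounds and in any case does not establish containment of $\spec_\eps(A)$ in a low-dimensional subspace. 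No such containment is proved in \cite{San12} or \cite{Lov15}, and it is not clear that it is even true; the modern proof is built specifically to avoid having to control the large spectrum of $A$ at sub-constant thresholds. So the core of your proposal is an unproven assertion, and the method you propose for proving it is not the method that produces the claimed quantitative bound.

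The paper's proof sidesteps the spectrum of $A$ entirely. It first uses Croot--Sisask almost-periodicity (Lemma~\ref{lem:croot}, applied to $f=\1_{A-A}$) to produce a very dense set $X$ of almost-periods, $|X|\ge p^n/(2L^{O(\log^3 L)})$ with $L=K^{24}p$, such that $\E_{a,b\in A}[\1_{A-A}(a-b-x)]\ge 0.9$ for all $x\in tX$ with $t\le 10\log L$ (Lemma~\ref{lem:shiftset}). Because $X$ is so dense, Chang's theorem applied to $X$ at the \emph{constant} threshold $1/2$ gives $\dim\spec_{1/2}(X)=O(\log(p^n/|X|))=O(\log^4 L)$, and one sets $V=\vspan(\spec_{1/2}(X))^{\perp}$. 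The $t$-fold sum $x=x_1+\cdots+x_t$ of almost-periods damps every Fourier coefficient outside $V^{\perp}$ by a factor $2^{-t}$ (Lemma~\ref{lem:thespace}), so averaging over $v\in V$ changes the expectation by at most $2^{-t}p^n/|A|$, which is negligible for $t=10\log L$; a pigeonhole and pairing argument ($v=v_1+v_2$) then places $V$ inside $2A-2A$. The repair to your write-up is therefore structural: apply Chang's theorem to the almost-period set $X$ at constant threshold, not to $A$ at threshold $\sqrt{\alpha}$, and replace the positivity-of-$\1_A*\1_{-A}*\1_A*\1_{-A}$ computation by the almost-periodicity averaging above.
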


The reduction uses the fact that Theorem~\ref{thm:subspace-large} implies the following seemingly stronger version in which~$|A|$ is not bounded. 

\begin{theorem}\label{thm:subspace}
Let~$A\subseteq\F_p^n$ be such that~$|A-A|\leq K|A|$.
Then, there exists a subspace~$V\subseteq 2A-2A$ of size~$|V| \geq |A|/p^{C\log^4 (K \cdot p)}$.
\end{theorem}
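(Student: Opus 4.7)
The plan is to reduce Theorem~\ref{thm:subspace} to the dense case, Theorem~\ref{thm:subspace-large}, by compressing $A$ into a smaller ambient space in which it becomes dense, while preserving the additive structure of $2A-2A$. After translating $A$ (which leaves $2A-2A$ and $|A-A|$ invariant) we may assume $0\in A$, and by restricting the ambient space to $\vspan(A)$ we may assume $\vspan(A)=\F_p^n$. If $|A|\geq p^{n-1}/K^{24}$ we invoke Theorem~\ref{thm:subspace-large} directly; otherwise we proceed to the compression.

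For the compression, the Pl\"unnecke--Ruzsa inequality gives $|8A-8A|\leq K^{c_0}|A|$ for an absolute constant $c_0$. Setting $d=\lceil\log_p(2p K^{c_0}|A|)\rceil$ and choosing a uniformly random surjective linear map $\phi\colon\F_p^n\to\F_p^d$, a union bound shows that with positive probability $\ker\phi$ avoids $(8A-8A)\setminus\{0\}$, so we may fix such a $\phi$. This condition forces $\phi$ to be injective on $4A-4A$, and hence on both $2A-2A$ and $A$. In particular $|\phi(A)|=|A|$ with $|\phi(A)-\phi(A)|\leq K|\phi(A)|$, and the choice of $d$ makes $\phi(A)$ dense enough in $\F_p^d$ to meet the hypothesis of Theorem~\ref{thm:subspace-large} (after a harmless adjustment of the absolute constants). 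The theorem thus yields a subspace $\bar V\subseteq 2\phi(A)-2\phi(A)=\phi(2A-2A)$ with $|\bar V|\geq |A|/p^{C\log^4(Kp)}$.

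The main technical point, which is the step I expect to require the most care, is to lift $\bar V$ back to a subspace of $\F_p^n$ contained in $2A-2A$. Using that $\phi$ is injective on $2A-2A$ and $\bar V\subseteq\phi(2A-2A)$, define $V$ to be the unique preimage of $\bar V$ inside $2A-2A$ under $\phi|_{2A-2A}$, so that $|V|=|\bar V|$. The closure of $V$ under addition is where the injectivity of $\phi$ on the \emph{larger} set $4A-4A$ becomes essential: for any $v_1,v_2\in V$, the sum $v_1+v_2$ lies in $4A-4A$ and has $\phi$-image in $\bar V\subseteq\phi(2A-2A)$, so $v_1+v_2$ must coincide with the unique preimage of that image inside $2A-2A$, and hence lies in $V$. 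Since every additive subgroup of $\F_p^n$ is automatically an $\F_p$-subspace, this suffices and the reduction is complete.
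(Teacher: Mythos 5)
Your reduction is correct and follows the same overall skeleton as the paper's: compress $A$ by a linear map into a low-dimensional space where it is dense, apply Theorem~\ref{thm:subspace-large} there, and lift the resulting subspace back through the unique preimages in $2A-2A$, using injectivity on a larger iterated sumset to get closure under addition. The difference is in how the compressing map is produced. The paper takes a Freiman homomorphism of order $12$ with the \emph{minimal} target dimension $m$ and derives density from an extremal argument (Proposition~\ref{prop:freiman}: minimality forces $\phi(\F_p\cdot(12A-12A))=\F_p^m$, whence $p^m\leq pK^{24}|B|$ by Pl\"unnecke). You instead fix the target dimension $d$ explicitly from Pl\"unnecke's bound $|8A-8A|\leq K^{16}|A|$ and show by a union bound that a random linear map has kernel avoiding $(8A-8A)\setminus\{0\}$; injectivity on $4A-4A$ then suffices for your lifting step, which is a slightly more economical bookkeeping than the paper's order-$12$ requirement (the paper needs injectivity on $6A-6A$ because it verifies closure via $u'+v'-w'\in 6A-6A$). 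Your probabilistic route is arguably more transparent and makes the density of the model explicit by construction, at the cost of a measure-zero amount of extra care with edge cases (e.g., ensuring $d\leq n$, which your separate treatment of the dense case and the restriction to $\vspan(A)$ handle); the paper's minimality route avoids randomness and any choice of parameters. Your density constant $2p^2K^{16}$ differs from the paper's $pK^{24}$, but since the proof of Theorem~\ref{thm:subspace-large} only uses $|A|\geq p^n/L$ with $L=\poly(K,p)$ and the conclusion is of the form $p^{-C\log^4(Kp)}|A|$, this is indeed absorbed into the absolute constant $C$ as you claim.
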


Lemma~\ref{lem:quasi4} follows by combining the above theorem with the following result of Pl\"{u}nnecke~\cite{Plu69}.

\begin{lemma}[Pl\"{u}nnecke]\label{lem:plunnecke}
Let $G$ be an Abelian group and $ A\subseteq G$ be such that~$|A-A|\leq K|A|$. 
Then, for positive integers~$k,\ell$, we have~$|k A - \ell A|\leq K^{k+\ell}|A|$.
\end{lemma}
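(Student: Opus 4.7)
The plan is to prove this classical Pl\"{u}nnecke--Ruzsa inequality via the modern approach of Petridis, which reduces everything to a short extremal lemma combined with Ruzsa's triangle inequality.

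First, I would establish Ruzsa's triangle inequality: for any finite non-empty $X,Y,Z \subseteq G$, $|X|\cdot|Y-Z| \leq |X-Y|\cdot|X-Z|$. For each $w \in Y-Z$ fix a representation $w = y(w) - z(w)$ with $y(w)\in Y$ and $z(w)\in Z$, and consider the map $(x,w) \mapsto (x-y(w),\,x-z(w))$ from $X \times (Y-Z)$ into $(X-Y)\times(X-Z)$. A pair $(u,v)$ in the image determines $w = v-u$, hence $y(w),z(w)$, and then $x$, so the map is injective and the inequality follows.

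Second, I would prove Petridis's key lemma: let $A,B$ be finite subsets of $G$ and let $\emptyset \ne X \subseteq A$ minimize the ratio $K^* := |X+B|/|X|$. Then for every finite $C \subseteq G$, $|X+B+C|\leq K^*|X+C|$. The proof is an induction on $|C|$. The base case $|C|=1$ is immediate from the definition of $K^*$. For the inductive step, decompose $C = C' \sqcup \{c\}$, isolate the ``fresh'' contribution coming from $(X+B+c)\setminus (X+B+C')$ by means of a subset $X' \subseteq X$ (those elements of $X$ whose $B$-translate by $c$ lies outside $X+B+C'$), and use minimality of $K^*$ applied to $X \setminus X'$ to control the new portion against the inductive bound on $|X+B+C'|$. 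Iterating this lemma with $C = (k-1)B$ produces $|X+kB| \leq (K^*)^k|X|$ for every $k\geq 0$.

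Finally, I would combine the ingredients. The hypothesis $|A-A|\leq K|A|$ reads $|A+(-A)|\leq K|A|$, so with $B := -A$ the minimizing $X \subseteq A$ has $K^*\leq K$, and iteration yields $|X-kA| = |X+kB| \leq K^k|X|$ for all $k\geq 0$. Applying Ruzsa's triangle inequality with $Y := kA$ and $Z := \ell A$ then gives
\[
|X|\cdot|kA-\ell A| \;\leq\; |X-kA|\cdot|X-\ell A| \;\leq\; K^{k+\ell}\,|X|^2,
\]
and dividing by $|X|$ while using $|X|\leq|A|$ produces the claimed bound $|kA-\ell A|\leq K^{k+\ell}|A|$.

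The main obstacle is the internal bookkeeping inside Petridis's lemma: one must precisely identify the portion of $X+B+C$ that is genuinely new upon adjoining a single element to $C$, and invoke minimality of $K^*$ on a sub-subset of $X$ whose definition breaks the apparent symmetry between ``old'' and ``new'' contributions. The remaining ingredients are a one-line injection and a single application of Ruzsa's triangle inequality.
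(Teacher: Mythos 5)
Your argument is correct. Note, however, that the paper does not prove this lemma at all: it is invoked as a black box with a citation to Pl\"unnecke's original work, whose proof went through a rather involved graph-theoretic (``commutative graph'') argument, later streamlined by Ruzsa. What you give instead is the modern Petridis route, which is genuinely self-contained and elementary: the triangle-inequality injection is verified correctly (the image pair $(x-y(w),x-z(w))$ recovers $w$, hence $y(w),z(w)$, hence $x$), the extremal lemma for a ratio-minimizing $X\subseteq A$ with $B=-A$ gives $|X-kA|\leq K^k|X|$ by induction on $|C|$, and the final combination $|X|\,|kA-\ell A|\leq |X-kA|\,|X-\ell A|\leq K^{k+\ell}|X|^2$ together with $|X|\leq |A|$ yields exactly the stated bound for all positive $k,\ell$. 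The one place where you are terse is the bookkeeping inside Petridis's induction: to close the argument you need both that $|Z+B|\geq K^*|Z|$ for $Z=\{x\in X: x+c+B\subseteq X+C'+B\}$ (minimality) and the separate counting fact that any $x$ with $x+c\in X+C'$ lies in $Z$, so that $|X+C|\geq |X+C'|+|X|-|Z|$; you allude to this but do not spell it out. With that detail filled in, your write-up is a complete proof, and it buys something the paper does not attempt: a short, purely combinatorial derivation of the inequality actually used in the proof of Lemma~\ref{lem:quasi4} and Lemma~\ref{lem:largesets}, rather than an appeal to the literature.
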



\begin{proof}[ of Lemma~\ref{lem:quasi4}]
Let~$V\subseteq 2A - 2A$ be a subspace as in Theorem~\ref{thm:subspace}.
Let $R\subseteq A$ be a maximal set such that no two elements from~$R$ belong to the same coset of~$V$.
By Lemma~\ref{lem:plunnecke},
$|R||V| = |R+V| = |A+V| \leq K^5|A|$.
It follows that $|R| \leq K^5p^{C\log^4 (K \cdot p)}$.
Let $B = A\cap (V + g)$ be such that $|B|$ is maximal.
Then, $|B| \geq |A|/|R| \geq |V|/K^5 \geq |A|/K^{5}p^{C\log^4 (K \cdot p)}$
and moreover,
$
|\vspan(B)|
\leq
|\vspan(V + g)|
\leq
p|V|
\leq
pK^4|A|$.\qedhere
\end{proof}

\begin{lemma}\label{lem:largesets}
Theorem~\ref{thm:subspace-large} implies Theorem~\ref{thm:subspace}.
\end{lemma}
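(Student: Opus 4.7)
The plan is to find a linear map $\phi\colon \F_p^n \to \F_p^{n'}$ for a suitably small $n'$ such that $A':=\phi(A)$ becomes dense in $\F_p^{n'}$ and still has doubling at most $K$, then apply Theorem~\ref{thm:subspace-large} to $A'$, and finally lift the subspace it produces back to a subspace of $\F_p^n$ sitting inside $2A-2A$. We may assume $|A| < p^{n-1}/K^{24}$, otherwise Theorem~\ref{thm:subspace-large} already applies. Pick $n'$ to be the smallest integer with $p^{n'} > K^{16}|A|$; then automatically $p^{n'-1} \leq K^{16}|A| \leq K^{24}|A|$, so the density requirement of Theorem~\ref{thm:subspace-large} will be satisfied after we pass to $\F_p^{n'}$.

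The next step is to choose $\phi$ by a first-moment argument so that $\phi$ is injective on $4A - 4A$. For $\phi$ uniformly random among linear maps $\F_p^n \to \F_p^{n'}$, each fixed nonzero vector lies in $\ker \phi$ with probability $p^{-n'}$. By Lemma~\ref{lem:plunnecke}, $|(4A - 4A) - (4A - 4A)| \leq |8A - 8A| \leq K^{16}|A|$, so the expected number of collisions is strictly less than $1$ by the choice of $n'$; hence some $\phi$ has the required injectivity. In particular $\phi$ is also injective on $2A - 2A$, so $|A'| = |A|$, $|A' - A'| \leq K|A'|$, and Theorem~\ref{thm:subspace-large} yields a subspace $V' \subseteq 2A' - 2A' = \phi(2A - 2A)$ of size $|V'| \geq p^{-C\log^4(Kp)}|A|$.

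Finally, define $V := \phi^{-1}(V') \cap (2A - 2A)$; injectivity of $\phi$ on $2A - 2A$ gives $|V| = |V'|$, and the key claim is that $V$ is an additive subgroup of $\F_p^n$, hence (since $\F_p$ is a prime field) an $\F_p$-subspace of the required size. Indeed, for $x, y \in V$, the sum $\phi(x + y) = \phi(x) + \phi(y)$ lies in $V' \subseteq \phi(2A - 2A)$, so there exists $z \in 2A - 2A$ with $\phi(z) = \phi(x + y)$; both $z$ and $x + y$ lie in $4A - 4A$, and injectivity of $\phi$ there forces $x + y = z \in 2A - 2A$, hence $x + y \in V$.

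The main obstacle is precisely this lifting step: producing a genuine linear subspace inside $2A - 2A$ rather than merely a set of the right cardinality. Overcoming it is what forces the use of $4A - 4A$ (one doubling larger than $2A - 2A$) in the injectivity requirement on $\phi$, and in turn drives the choice of $n'$ via Plünnecke's bound on $|8A - 8A|$.
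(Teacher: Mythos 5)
Your proof is correct, and the skeleton is the one the paper uses: compress $A$ by a linear map that is injective on a suitable iterated difference set, apply Theorem~\ref{thm:subspace-large} to the now-dense image, and lift the subspace back (your lifting step --- two preimages of the same point of $4A-4A$ must coincide --- is exactly the paper's closure argument, which works with $u'+v'-w'\in 6A-6A$ instead). The genuine difference is how the compression map is produced. The paper takes a Freiman homomorphism of order $12$ of $A$ with the target dimension $m$ \emph{minimal} and uses minimality (Proposition~\ref{prop:freiman}) to conclude $\phi(\F_p\cdot(12A-12A))=\F_p^m$, whence $p^m\le pK^{24}|B|$ by Pl\"unnecke; you instead fix the target dimension $n'$ in advance via the Pl\"unnecke bound $|8A-8A|\le K^{16}|A|$ and get injectivity on $4A-4A$ from a first-moment argument over uniformly random linear maps (a nonzero vector lands in the kernel with probability $p^{-n'}$). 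Your route is more self-contained --- it needs neither Proposition~\ref{prop:freiman} nor the preliminary translation to arrange $0\in A$, and it exhibits the required map directly rather than extracting density from an extremal/minimality assumption --- while the paper's version is deterministic and generalizes more readily to settings where one wants Freiman homomorphisms of all orders up to $t$ simultaneously. Both give the same quantitative conclusion, so either argument is acceptable here.
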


The proof of Lemma~\ref{lem:largesets} uses the following definition and proposition.

\begin{definition}
Let~$A\subseteq\F_p^n$ be a subset and $t,m$ be positive integers and $\phi:\F_p^n\to\F_p^m$ be a linear map.
Then~$\phi$ is a {\em Freiman homomorphism} of order~$t$ of the set~$A$ if for any $k, \ell \in \N \cup \{0\}$, $k + \ell = t$, it maps distinct elements~$a,b\in kA - \ell A$ to distinct elements~$\phi(a),\phi(b)$ in~$\F_p^m$.
\end{definition}

\begin{proposition}\label{prop:freiman}
Let~$A\subseteq\F_p^n$ be a subset and let~$t$ be a positive integer.
Let~$m$ be the smallest integer such that there exists a Freiman homomorphism~$\phi:\F_p^n\to\F_p^m$ of order~$t$ of the set~$A$.
Then,~$\phi(\F_p \cdot (tA - tA)) = \F_p^m$.
\end{proposition}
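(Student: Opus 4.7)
The plan is to argue by contradiction, exploiting the minimality of $m$. Write $V := \phi(\F_p \cdot (tA - tA))$; by linearity this equals $\F_p \cdot \phi(tA - tA)$, so $V$ contains $0$ and is closed under scalar multiplication by $\F_p$, i.e.\ it is a union of lines through the origin in $\F_p^m$. Suppose towards contradiction that $V \subsetneq \F_p^m$ and pick any $y \in \F_p^m \setminus V$. Since $V$ is closed under $\F_p$-scaling, the entire line $L := \F_p \cdot y$ satisfies $V \cap L = \{0\}$. Let $q: \F_p^m \to \F_p^m/L \cong \F_p^{m-1}$ be the canonical quotient and define $\phi' := q \circ \phi: \F_p^n \to \F_p^{m-1}$. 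The goal is to show that $\phi'$ is itself a Freiman homomorphism of order $t$ of $A$, which directly contradicts the minimality of $m$.

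To verify the Freiman condition for $\phi'$, I would fix nonnegative integers $k,\ell$ with $k+\ell = t$ and two distinct points $a,b \in kA - \ell A$. The key combinatorial observation is that $a - b$ always lies in $tA - tA$: writing each of $a,b$ as a sum of $k$ elements of $A$ minus a sum of $\ell$ elements of $A$, the difference $a-b$ collects exactly $k+\ell = t$ positive and $t$ negative contributions from $A$. Since $\phi$ is Freiman of order $t$ on $kA-\ell A$, the element $\phi(a-b) = \phi(a) - \phi(b)$ is a nonzero member of $\phi(tA - tA)$. If it belonged to $L$, then $\phi(a-b) = c_0 y$ for some $c_0 \in \F_p \setminus\{0\}$, and then $y = c_0^{-1}\phi(a-b) \in \F_p \cdot \phi(tA-tA) = V$, contradicting the choice of $y$. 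Hence $\phi(a-b) \notin L$, so $q(\phi(a)) \neq q(\phi(b))$, as required.

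The main subtlety to keep in mind is that the Freiman hypothesis on $\phi$ only guarantees injectivity on each set $kA-\ell A$ with $k+\ell = t$, not directly on the larger set $tA-tA$; the bridge is precisely the inclusion $(kA - \ell A) - (kA - \ell A) \subseteq tA - tA$ whenever $k+\ell = t$, which lets nonvanishing of $\phi$ on differences propagate to all of $\phi(tA-tA)$. Beyond this, the argument is essentially formal: locate a direction that $\phi$ fails to hit (even up to scalars) via $tA - tA$, quotient it out, and observe that the quotient cannot collapse any pair of image points that $\phi$ already separated.
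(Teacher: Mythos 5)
Your proof is correct and follows essentially the same route as the paper's: argue by contradiction with the minimality of $m$, locate a vector $y$ whose line misses $\phi(tA-tA)$, quotient out $\vspan(y)$, and verify the composition is still a Freiman homomorphism of order $t$ via the inclusion $(kA-\ell A)-(kA-\ell A)\subseteq tA-tA$. Your write-up is in fact slightly more explicit than the paper's on that last inclusion, which the paper uses implicitly.
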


\begin{proof}
Towards a proof by contradiction, assume that there exists an~$x\in \F_p^m$ such that for any $\alpha \in \F_p \setminus \{0\}$, $\alpha \cdot x\not\in\phi(tA - tA)$. We will show that the existence of such a vector contradicts the assumption that~$m$ is minimal. To this end, let~$\psi:\F_p^m\to\F_p^{m-1}$ be an arbitrary linear map whose kernel is the subspace $S$ of $\F_p^m$ spanned by $x$. We show that~$\psi\circ\phi:\F_p^n\to\F_p^{m-1}$ is also a Freiman homomorphism of order $t$ of~$A$, which implies our contradiction to the minimality of~$m$.

Let~$a,b$ be arbitrary distinct vectors in $kA-\ell A$ for some $k, \ell \in \N \cup \{0\}$, $k + \ell = t$. Moreover, since (by assumption)~$\phi$ is linear and in particular a Freiman homomorphism, and $S \cap \phi(tA - tA) = \{0\}$, we have $\phi(a) - \phi(b) = \phi(a-b) \not\in S$. Hence, since the composition of two linear operators is again linear, 
$
\psi\circ\phi(a) - \psi\circ\phi(b) =  \psi\big(\phi(a-b)\big)\ne 0$,
which is to say that $\psi\circ\phi$ maps distinct elements of~$kA-\ell A$ to distinct elements in~$\F_p^{m-1}$. In other words~$\psi\circ\phi$ is a Freiman homomorphism of order~$t$ of~$A$, giving the desired contradiction.
\end{proof}


\begin{proof}[ of Lemma~\ref{lem:largesets}]
Let~$A\subseteq\F_p^n$ be a set such that~$|A-A|\leq K|A|$ as in the statement of Theorem~\ref{thm:subspace}. We begin by observing that without loss of generality, we may assume that~$0\in A$. To see this, let~$a\in A$ be an arbitrary element and consider the set~$A' = A - a$, which certainly contains the origin. Also, $A' - A' = A-A$ and therefore~$|A'-A'| = |A-A|\leq K|A| = K|A'|$, but also~$2A' - 2A' = 2A-2A$. Hence, if there exists a subspace~$V\subseteq 2A'-2A'$ of the size promised by Theorem~\ref{thm:subspace} then~$V$ is also a subspace of~$2A-2A$.

Notice that since the assumption~$0\in A$ implies that~$\ell A\subseteq tA$ for every~$\ell\in[t]$, it follows that a Freiman homomorphism of order~$t$ of $A$ is also a Freiman homomorphism of order $\ell$ of~$A$.

To deduce the conclusion of Theorem~\ref{thm:subspace} from Theorem~\ref{thm:subspace-large} we use a Freiman homomorphism to embed~$A$ into a smaller space such that it occupies a large fraction of it. To this end, let~$\phi:\F_p^n\to\F_p^m$ be a Freiman homomorphism of order $12$ of~$A$ such that~$m$ is minimal. Let~$B = \phi(A) \subseteq\F_p^m$ be the embedding of~$A$ into~$\F_p^m$. Notice that since~$\phi$ is also a Freiman homohorphism of order $2$ of~$A$, we have~$|B-B| = |A-A| \leq K|A| = K|B|$. Moreover, by Lemma~\ref{lem:plunnecke} and Proposition~\ref{prop:freiman},
\begin{align*}
p^m &= |\F_p^m| = |\phi(\F_p \cdot (12 A - 12A))| = |\F_p(12\phi(A) - 12\phi(A))| \leq K^{24} \cdot p \cdot |B|,
\end{align*}
where the third equality uses linearity of~$\phi$.
Thus $|B|$ satisfies the conditions of Theorem~\ref{thm:subspace-large}.

Theorem~\ref{thm:subspace-large}	says that there exists a subspace~$V\subseteq 2B-2B$ of size $|V| \geq p^{-C\log^4 (K \cdot p)}|B|$. Since~$|B| = |A|$, the subspace~$|V|$ has the same size as the subspace promised to exist by Theorem~\ref{thm:subspace}. We now lift~$V$ into the larger space~$\F_p^n$ such that it becomes contained in~$2A-2A$. To this end, recall that~$\phi$ maps distinct elements of~$2A-2A$ into distinct elements of~$2B-2B$, since $2B-2B = \phi(2A-2A)$ and~$\phi$ is a Freiman homomorphism of order~4 of~$A$.
Hence for each element~$v\in V$ there is a unique element~$v'\in 2A-2A$ such that $\phi(v') = v$. Let~$V'\subseteq 2A-2A$ be the set of size~$|V'| = |V|$ such that~$\phi(V') = V$.

What is left is to show that~$V'$ is in fact a subspace of~$\F_p^n$. Clearly it suffices to show that if~$u',v'\in V'$ then $u'+v'\in V'$. Let $u = \phi(u')$ and $v = \phi(v')$.
Since~$V$ is a linear subspace, $w = u+v\in V$.
Let~$w'$ be the unique element in~$V'$ such that~$\phi(w') = w$. 
Since~$u',v,'w'\in 2A-2A$ we have $u'+v'- w'\in 6 A-6A$. 
As~$\phi$ is a Freiman homomorphims of order 12 of~$A$ and must map~$0\in 12A$ to~$0\in\F_p^m$, we have~$\phi(u'+v'-w') = u+v-w = 0$.
This implies~$u'+v'-w' = 0$ and thus~$u'+v' = w'\in V'$ as desired.
\end{proof}

%
\section{Proof of Theorem~\ref{thm:subspace-large}}

The proof of~Theorem~\ref{thm:subspace-large} is split into two parts.

\subsection{Part 1: A large gentle shifting set}

The first part uses the following standard notational conventions and definitions.
For a set $A \subseteq \F_p^n$, let $\mathbf{1}_A: \F_p^n \mapsto \{0,1\}$ be the indicator function of~$A$, and let $\rho_A(x) = (p^n/|A|) \mathbf{1}_A(x)$ be its density function. For $a \in \F_p^n$, we shorthand $\rho_{\{a\}}(x)$ by $\rho_{a}(x)$.
The convolution of two functions $f, g : \F_p^n \mapsto \R$ is defined by
$$
f * g(x) = \E_{y \in \F_p^n} [f(y) g(x-y)]
$$
and their inner product by 
$$
\langle f,g \rangle = \E_{x \in \F_p^n} [f(x) g(x)]. 
$$
Note the identities $\rho_A * f(x) = \E_{a \in A} [f(x-a)]$ and $\rho_x *f(a) = f(x - a)$.

The main result of this section is as follows.

\begin{lemma}\label{lem:shiftset}
Let~$A\subseteq\F_p^n$ be such that~$|A|\geq p^n/L$.
Then, for any~$\tau > 0$, there exists a set~$X\subseteq\F_p^n$ of size~$|X| \geq p^n/(2L^{C\tau^2\log^3 L})$ such that for every positive integer~$t \leq \tau\log L$ and every~$x\in tX$, we have
$
\Exp_{a,b\in A}[1_{A-A}(a-b-x)] \geq 0.9$.
\end{lemma}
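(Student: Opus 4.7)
The argument follows the Croot--Sisask almost-periodicity paradigm. Write
\[
g(x) := \E_{a,b \in A}[\1_{A-A}(a-b-x)] = (\rho_A * h)(x),
\]
where $h := \rho_{-A} * \1_{A-A}$. A short unfolding shows $h(y) = \Pr_{a \in A}[y+a \in A-A] \in [0,1]$, so $g \in [0,1]$ and $g(0) = \E_{a,a' \in A}[\1_{A-A}(a'-a)] = 1$. The plan is to apply Croot--Sisask to $h$ (not directly to $g$), obtain a set $X$ of approximate periods in an $L^{2k}$ norm, and then use Young's inequality together with the convolution structure $g = \rho_A * h$ to upgrade the $L^{2k}$-closeness into the desired pointwise bound.

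For the first step I would invoke the sampling-based Croot--Sisask lemma with parameters $\eps > 0$ and integer $k \ge 1$: this produces $X \subseteq \F_p^n$ with $|X| \ge p^n \cdot L^{-Ck/\eps^2}$ such that $\|\tau_x h - h\|_{L^{2k}} \le \eps$ for every $x \in X$. The proof sketch is standard: sample $\vec a = (a_1,\dots,a_m) \in A^m$ uniformly i.i.d.\ with $m = \Theta(k/\eps^2)$; approximate $h$ by the empirical average $h_{\vec a}(y) := m^{-1}\sum_{i=1}^m \1_{A-A}(y+a_i)$; bound $\E_{\vec a}\|h_{\vec a}-h\|_{L^{2k}}^{2k}\le (Ck/m)^k$ by a Marcinkiewicz--Zygmund moment estimate; and apply pigeonhole on the $|A|^m$ tuples to extract $X$. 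The translation invariance of the $L^{2k}$ norm and a telescoping sum then yield $\|\tau_x h - h\|_{L^{2k}} \le t\eps$ for every $x \in tX$.

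Next, Young's inequality with exponents $\tfrac{2k-1}{2k} + \tfrac{1}{2k} = 1$ gives
\[
\|\tau_x g - g\|_\infty = \|\rho_A * (\tau_x h - h)\|_\infty \le \|\rho_A\|_{L^{2k/(2k-1)}} \cdot \|\tau_x h - h\|_{L^{2k}}.
\]
The direct computation $\E_x \rho_A(x)^{2k/(2k-1)} = (|A|/p^n)^{-1/(2k-1)} \le L^{1/(2k-1)}$ yields $\|\rho_A\|_{L^{2k/(2k-1)}} \le L^{1/(2k)}$, so for every $x \in tX$,
\[
g(x) \ge g(0) - L^{1/(2k)} \cdot t\eps = 1 - L^{1/(2k)} \cdot t\eps.
\]
Choosing $k = \lceil\log_2 L\rceil$ makes $L^{1/(2k)} \le \sqrt 2$, and setting $\eps = 1/(C\tau\log L)$ for a sufficiently large absolute $C$ forces $L^{1/(2k)} \cdot t\eps \le 0.1$ whenever $t \le \tau\log L$. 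Hence $g(x) \ge 0.9$ on $tX$, and $|X|/p^n \ge L^{-Ck/\eps^2} = L^{-C\tau^2\log^3 L}$, matching the statement.

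The main conceptual obstacle --- and the reason the exponent is $p$-free --- lies in the passage from $L^{2k}$-closeness to pointwise closeness. A naive attempt to apply Croot--Sisask directly to $g$ would leave one with an $L^{2k}$-bound on $\tau_x g - g$, whose conversion to a pointwise inequality at the single point $0$ incurs a prohibitive $|G|^{1/(2k)} = p^{n/(2k)}$ factor. The key trick is to apply Croot--Sisask to $h$ and exploit the smoothing by $\rho_A$ via Young's inequality, replacing the loss by $\|\rho_A\|_{L^{2k/(2k-1)}} = L^{1/(2k)}$, which is a harmless constant for $k \asymp \log L$. Balancing this $k$ against the requirement $L^{1/(2k)} \cdot t\eps \le 0.1$ with $t = \tau\log L$ is what produces the final exponent $C\tau^2\log^3 L$.
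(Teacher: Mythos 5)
Your proposal is correct and follows essentially the same route as the paper: a sampling/Marcinkiewicz--Zygmund Croot--Sisask lemma in $L^q$ with $q\asymp\log L$, telescoping over $x\in tX$, and a H\"older step against $\rho_A$ (your ``$L^\infty$-endpoint of Young'' applied to $\rho_A*(\tau_xh-h)$ is the same estimate as the paper's $\langle\,\cdot\,,\rho_A\rangle\le\|\cdot\|_q\|\rho_A\|_r$), with the same parameter choices yielding the exponent $C\tau^2\log^3L$.
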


The proof of the above lemma relies crucially on the following $\F_p^n$-version of a more general result of Croot and Sisask~\cite{Croot:2010}.

\begin{lemma}\label{lem:croot}
Let~$A\subseteq\F_p^n$ be such that~$|A| \geq p^n/L$ and let $f:\F_p^n\to [0,1]$ be a function. 
Then, for any $q\geq 1$ and $\eps\in (0,1)$, there exists a set~$X\subseteq\F_p^n$ of size~$|X| \geq p^n/(2L^{q/\eps^2})$ such that for every~$x\in X$, 
\beqn
\|\rho_x*\rho_A*f - \rho_A*f\|_q \leq C\eps.
\eeqn
\end{lemma}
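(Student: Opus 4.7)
The plan is to prove Lemma~\ref{lem:croot} by the sampling-and-pigeonhole argument of Croot and Sisask, adapted to $\F_p^n$. The outline is to sample $k \approx q/\eps^2$ elements of $A$ uniformly and independently, use the Marcinkiewicz-Zygmund inequality to show that the empirical average $g_\mathbf{a}(y) := \frac{1}{k}\sum_{i=1}^{k} f(y - a_i)$ is an $L^q$-approximation to $\rho_A * f$ for most $\mathbf{a} \in A^k$, and then extract almost-periods $x$ of $\rho_A * f$ by finding many pairs $(\mathbf{a}, \mathbf{a}')$ of good tuples related by a uniform shift $\mathbf{a}' = \mathbf{a} + x \mathbf{1}$, since for such pairs the identity $g_{\mathbf{a}'} = \rho_x * g_\mathbf{a}$ holds exactly.

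For the concentration step, set $k := \lceil q/\eps^2 \rceil$ and note that for each fixed $y$, the variables $f(y - a_i)$ are iid, take values in $[0,1]$, and have mean $\rho_A * f(y)$. The Marcinkiewicz-Zygmund inequality (with Jensen to handle the case $q \in [1,2)$) then yields $\E_\mathbf{a}[|g_\mathbf{a}(y) - \rho_A * f(y)|^q] \leq (C_0 \sqrt{q/k})^q \leq (C_0 \eps)^q$ for an absolute $C_0 > 0$. Integrating over $y$ and applying Markov's inequality produces a \emph{good set} $T := \{\mathbf{a} \in A^k : \|g_\mathbf{a} - \rho_A * f\|_q \leq 2 C_0 \eps\}$ with $|T| \geq |A|^k / 2$.

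For the pigeonhole step, partition $A^k$ by the translation type $\phi(\mathbf{a}) := (a_2 - a_1, \ldots, a_k - a_1) \in \F_p^{n(k-1)}$. Since $\phi$ takes at most $p^{n(k-1)}$ values, some fixed type $(d_1, \ldots, d_{k-1})$ is realised by at least $|T|/p^{n(k-1)} \geq |A|^k / (2 p^{n(k-1)}) \geq p^n/(2 L^k)$ tuples in $T$. These are parameterised by their first coordinate, giving a set $S' \subseteq A$ of size at least $p^n/(2 L^{q/\eps^2})$. For any $a, a' \in S'$ the corresponding tuples $\mathbf{a}, \mathbf{a}' \in T$ satisfy $\mathbf{a}' = \mathbf{a} + x\mathbf{1}$ with $x := a' - a$, hence $g_{\mathbf{a}'} = \rho_x * g_\mathbf{a}$, and the triangle inequality together with shift-invariance of the $L^q$ norm gives
\[
\|\rho_x * \rho_A * f - \rho_A * f\|_q \leq \|\rho_x * (\rho_A * f - g_\mathbf{a})\|_q + \|g_{\mathbf{a}'} - \rho_A * f\|_q \leq 4 C_0 \eps.
\]
Thus the set $X$ in the conclusion contains $S' - S'$, which has size at least $|S'|$, and the lemma follows with $C := 4 C_0$.

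The crux of the argument, and the main potential obstacle, is the observation that two good tuples of the \emph{same} translation type are related by an \emph{exact} uniform shift, so that $g_{\mathbf{a}'} = \rho_x * g_\mathbf{a}$ holds identically; this is what keeps the triangle inequality to two terms and the final error at $O(\eps)$. Quantitatively, the three steps must balance so that the single choice $k \approx q/\eps^2$ simultaneously makes the Marcinkiewicz-Zygmund bound at most $\eps^q$, loses only a factor of $2$ in Markov's inequality, and leaves at least $p^n/(2 L^{q/\eps^2})$ tuples in a single pigeonhole class; tracking the constants here is routine but essential.
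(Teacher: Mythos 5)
Your proof is correct, and it is the same Croot--Sisask sampling argument the paper uses: approximate $\rho_A*f$ by an empirical average over $\ell\approx q/\eps^2$ random points of $A$, control the $L^q$ error via Marcinkiewicz--Zygmund plus Markov, and pigeonhole to extract a large set of almost-periods. The one genuine difference is how the pigeonhole is organized. The paper considers, for each shift $x$, the set $S_x$ of tuples that approximate the shifted convolution $\rho_{A+x}*f$, and pigeonholes over tuples to find a single tuple lying in $S_x$ for at least $p^n/(2L^\ell)$ values of $x$; the almost-periods then come from pairs of such shifts. You instead keep a single good set $T\subseteq A^\ell$ of tuples approximating $\rho_A*f$ itself, partition it by translation type $(a_2-a_1,\dots,a_\ell-a_1)$, and use the fact that two good tuples of the same type are exact translates of each other, so the difference of their first coordinates is an almost-period; the count $|T|/p^{n(\ell-1)}\geq p^n/(2L^\ell)$ gives the same bound. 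Your bookkeeping is arguably cleaner: the concentration estimate is only ever applied to $\rho_A*f$, which sidesteps a small imprecision in the paper's write-up (where the tuples are sampled from $A^\ell$ yet compared against $\rho_{A+x}*f$, so the centred variables are implicitly meant to be drawn from $A+x$), and you explicitly note the Jensen step needed for $q\in[1,2)$, which the paper leaves implicit. The only blemish, shared with the paper, is treating $q/\eps^2$ as an integer; the ceiling costs at most one extra factor of $L$, which is harmless for the application.
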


\begin{proof}
Set~$\ell = q/\eps^2$.
We show that 
for each~$x\in\F_p^n$, at least half of the $\ell$-tuples~$(a_1,\dots,a_\ell)\in A^\ell$ satisfy
\beq\label{eq:empirical}
\big\|\rho_{(A+x)}*f - \frac{1}{\ell}\sum_{i=1}^\ell\rho_{a_i}*f\big\|_q \leq \frac{C\eps}{2}.
\eeq
To see why this suffices, for each~$x\in\F_p^n$ let $S_x\subseteq (\F_p^n)^\ell$ be the set of $\ell$-tuples in $A^\ell$ such that~\eqref{eq:empirical} holds.
We have~$|S_x| \geq |A|^\ell/2 \geq p^{n\ell}/(2 \cdot L^{\ell})$.
By the Pigeonhole Principle, there exists an $\ell$-tuple $(a_1,\dots,a_\ell)\in A^\ell$ that, for some set $X'\subseteq \F_p^n$ of size at least $p^n/(2 \cdot L^\ell)$,  belongs to~$S_x$ for each $x\in X'$.
By the triangle inequality, this implies that for every $x,x'\in X'$, we have
\begin{align}
\|\rho_{A+x}*f - \rho_{A+x'}*f\|_q
&\leq \big\|\rho_{(A+x)}*f - \frac{1}{\ell}\sum_{i=1}^\ell\rho_{a_i}*f\big\|_q
+\big\|\rho_{(A+x')}*f - \frac{1}{\ell}\sum_{i=1}^\ell\rho_{a_i}*f\big\|_q\nonumber\\
& \leq C\eps.\label{eq:shiftcroot}
\end{align}
Since the~$\ell_q$ norm is invariant under permutations of the coordinates, the left-hand side of~\eqref{eq:shiftcroot} equals
\beqn
\|\rho_{A+x}*f - \rho_{A+x'}*f\|_q = \|\rho_{A+x+x'}*f - \rho_{A}*f\|_q.
\eeqn
For any~$x\in X'$, the set~$X = X' + x$ then has the desired properties since~$|X| = |X'| \geq p^n/(2L^\ell)$.

We thus set out to prove that for every~$x\in\F_p^n$, inequality \eqref{eq:empirical} holds for at least half of~$A^\ell$.
Let~$a_1,\dots,a_\ell$ be independent uniformly distributed $A$-valued random variables. For each~$i\in[\ell]$ and~$x,y\in\F_p^n$ define the random variable $Y_i^x(y) = \rho_{A+x}*f(y) - \rho_{a_i}*f(y)$.
Notice that since the functions~$\rho_{A+x}*f$ and $\rho_{a_i}*f$ are~$[0,1]$-valued, we have~$|Y_i^x(y)|\leq 2$.
Hence, by definition of the $\ell_q$ norms, linearity of expectation and the definition of the~$Y_i^x(y)$ random variables,
\begin{align*}
\Exp\Big[\big\|\rho_{A+x}*f - \frac{1}{\ell}\sum_{i=1}^\ell \rho_{a_i}*f\big\|_q^q\Big]
&=\Exp\Big[\Exp_{y\in\F_p^n}\Big[\big|\frac{1}{\ell}\sum_{i=1}^\ell\big(\rho_{A+x}*f (y)-  \rho_{a_i}*f(y)\big)\big|^q\Big] \Big] \\
&=\Exp_{y\in\F_p^n}\Big[\Exp\Big[\big|\frac{1}{\ell}\big(Y_1^x(y) + \cdots + Y_\ell^x(y)\big)\big|^q \Big]\Big] \\
&\leq (C'q/\ell)^{q/2} \\ 
&=(\sqrt{C'}\eps)^q,
\end{align*}
where the last two lines follow from the Marcinkiewicz-Zygmund inequality~\cite{MZ37} and our choice of~$\ell$. 
Hence, by Markov's inequality,
\beq\label{eq:markov}
\Pr\Big[\big\| \rho_{A+x}*f - \frac{1}{\ell}\sum_{i=1}^\ell \rho_{a_i}*f\big\|_q \leq \frac{\sqrt{C}\eps}{2}\Big] \geq \frac{1}{2},
\eeq
showing that if we let~$C = \sqrt{C'}$, at least half of the~$\ell$-tuples in~$A^\ell$ satisfy~\eqref{eq:empirical} as required.
\end{proof}

\begin{proof}[ of Lemma~\ref{lem:shiftset}]
Let~$q\geq 1$ and $\eps\in(0,1)$ be parameters to be set later and let~$X\subseteq \F_p^n$ be a set as promised to exist by Lemma~\ref{lem:croot}.
Let~$t$ be a positive integer.
We begin by observing that for every~$x\in tX$,
\begin{align*}
\Pr_{a,b \in A} [a-b-x \in A - A]
&= \Exp_{a,b\in A}[\1_{A-A}(a-b-x)]\\ 
&= \langle\rho_x*\rho_A*\mathbf{1}_{A - A},\rho_A\rangle.
\end{align*}
Moreover, by H\"{o}lder's inequality,
\begin{align}
1 - \langle\rho_x*\rho_A*\1_{A-A},\rho_A\rangle
&= \langle\rho_A*\1_{A-A} - \rho_x*\rho_A*\1_{A-A},\rho_A\rangle\nonumber\\
&\leq \|\rho_A*\1_{A-A} - \rho_x*\rho_A*\1_{A-A}\|_q \|\rho_A\|_r,\label{eq:shiftnorms}
\end{align}
where~$r$ is defined by~$1/q + 1/r = 1$.
To lower bound the above expectations we upper bound the two norms in~\eqref{eq:shiftnorms}.
By our assumption on the size of~$A$, we have~$\|\rho_A\|_r = (p^n/|A|)^{(r-1)/r} \leq L^{1/q}$.
Let $x = x_1 + \cdots + x_t$, where $x_i \in X$ for $1 \le i \le t$. To bound the remaining norm we telescope the difference
\begin{multline*}
\rho_A*\1_{A-A} - \rho_{x_1 + \cdots + x_t}*\rho_A*\1_{A-A} =
\sum_{i=1}^t (\rho_{x_1+\cdots + x_{i-1}}*\rho_A*\1_{A-A} - \rho_{x_1+\cdots + x_i}*\rho_A*\1_{A-A}).
\end{multline*}
Then, by the triangle inequality, invariance of the~$\ell_q$ norm under permutations of the coordinates and Lemma~\ref{lem:croot} we get
\begin{align*}
\|\rho_A*\1_{A-A} - \rho_{x_1 + \cdots + x_t}*\rho_A*\1_{A-A}\|_q
&\leq 
\sum_{i=1}^t \|\rho_{x_1+\cdots + x_{i-1}}*\rho_A*\1_{A-A} - \rho_{x_1+\cdots + x_i}*\rho_A*\1_{A-A}\|_q\\
&=
\sum_{i=1}^t \|\rho_A*\1_{A-A} - \rho_{x_i}*\rho_A*\1_{A-A}\|_q \leq tC\eps.
\end{align*}

Hence, for every~$x\in tX$, we have
\begin{align*}
\Exp_{a,b}[1_{2A}(x+a+b)]
\geq 1 - \|\rho_A*\1_{A-A} -\rho_x*\rho_A*\1_{A-A}\|_q \|\rho_A\|_r
\geq 1 - tC\eps \cdot L^{1/q} \;.
\end{align*}
Set~$\eps = 1/(20 C t)$, $q = \log L$.
Then, for~$t \leq \tau\log L$ the above bound becomes
$
0.9
$
and the size of~$X$ is at least~$|X| \geq p^n/(2L^{q/\eps^2}) \geq p^n/(2L^{400 C^2 \tau^2\log^3 L})$.
\end{proof}

\subsection{Part 2: A subspace from the Fourier support}

Denote $\omega = e^{2\pi i/p}$.
For $X \subset \F_p^n$, $u\in\F_p^n$, and $\gamma\in[0,1]$, define 
$$
\widehat{X}(u) = \E_{x \in X} [\omega^{\langle u, x\rangle}]$$
and define
$$
\spec_\gamma(X) = \{u\in\F_p^n\st |\widehat X(u)|\geq \gamma\}.
$$
For $Y\subseteq\F_p^n$, let
$$
Y^\perp = \{v\in \F_p^n\st \langle u,v\rangle = 0,\quad \forall u\in Y\}.
$$

\begin{theorem}[Chang~\cite{Chang:2002}]\label{thm:chang}
Let~$X\subseteq\F_p^n$. Then
$$
\dim(\spec_\gamma(X)) \leq 8\gamma^{-2}\log(p^n/|X|).
$$
\end{theorem}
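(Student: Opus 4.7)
The plan is to prove this classical result by duality combined with a Rudin-type moment estimate, following the original approach of Chang. Let $d = \dim(\spec_\gamma(X))$, and pick an $\F_p$-linearly independent basis $\Lambda = \{\lambda_1,\ldots,\lambda_d\}\subseteq\spec_\gamma(X)$ of that span. For each $i$, let $\sigma_i\in\C$ be a unit-modulus number chosen so that $\sigma_i\widehat X(\lambda_i) = |\widehat X(\lambda_i)|\geq \gamma$, and set
$$
g(x) = \sum_{i=1}^d \sigma_i\, \omega^{\langle \lambda_i, x\rangle}.
$$
The strategy is to bound the inner product $\langle \mathbf 1_X, g\rangle$ from below using the spectrum condition and from above via H\"older together with an upper bound on $\|g\|_q$.

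Expanding the definitions and using $\sigma_i\widehat X(\lambda_i) = |\widehat X(\lambda_i)|$,
$$
\langle \mathbf 1_X, g\rangle = \frac{|X|}{p^n}\sum_{i=1}^d \sigma_i\widehat X(\lambda_i) = \frac{|X|}{p^n}\sum_{i=1}^d |\widehat X(\lambda_i)| \geq \frac{|X|\, d\gamma}{p^n}.
$$
On the other hand, for $q\geq 2$ with conjugate exponent $r = q/(q-1)$, H\"older's inequality gives
$$
|\langle \mathbf 1_X, g\rangle| \leq \|\mathbf 1_X\|_r\|g\|_q = (|X|/p^n)^{1/r}\|g\|_q,
$$
so rearranging yields the lower bound $\|g\|_q \geq d\gamma\,(|X|/p^n)^{1/q}$.

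The key step is an upper bound on $\|g\|_q$ that exploits the linear independence of $\Lambda$. After a linear change of basis on $\F_p^n$, we may assume $\lambda_i$ is the $i$-th standard basis vector; then $\omega^{\langle \lambda_i, x\rangle} = \omega^{x_i}$ depends only on the $i$-th coordinate, and for $x$ uniform on $\F_p^n$ the complex random variables $Y_i := \sigma_i\omega^{x_i}$ are independent, mean-zero and bounded in modulus by $1$. The Marcinkiewicz-Zygmund inequality (already invoked in the proof of Lemma~\ref{lem:croot}) applied to $g(x) = \sum_i Y_i$ then yields $\|g\|_q \leq C\sqrt{q\, d}$ for an absolute constant $C$.

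Combining the two bounds, $d\gamma^2(|X|/p^n)^{2/q} \leq C^2 q$. Optimizing by setting $q = 2\log(p^n/|X|)$ gives $(|X|/p^n)^{2/q} = e^{-1}$, hence $d \leq 2eC^2\gamma^{-2}\log(p^n/|X|)$, with the precise constant $8$ recovered by using a sharp form of the Khintchine-type bound. The only nontrivial ingredient is the Rudin-type moment inequality for linearly independent characters, and the main obstacle it might have posed dissolves once one exploits the change-of-basis reduction above, which turns the sum into one of independent bounded random variables.
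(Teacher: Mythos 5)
The paper imports this result from Chang's paper without proof, so there is no internal argument to compare against; your proposal is the standard proof of Chang's lemma (lower-bound the correlation of $\mathbf 1_X$ with the phase-corrected character sum over an independent subset of the spectrum, then beat it with H\"older plus a Rudin/Khintchine moment bound), and it is sound. The change-of-basis observation correctly reduces the Rudin-type inequality to independence of the coordinates $\omega^{x_i}$, which are genuinely mean-zero for $p$ prime; the only loose ends are minor and acknowledged: the precise constant $8$ (which depends on the sharp Marcinkiewicz--Zygmund constant and the base of the logarithm) and the degenerate case $\log(p^n/|X|)<1$, where one should take $q=\max\{2,\,2\log(p^n/|X|)\}$.
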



\begin{lemma}\label{lem:thespace}
Let$~t$ be a positive integer, $A,X\subseteq\F_p^n$.
Let $a,b\in A$, $x_1, \ldots, x_t \in X$ be independent uniformly distributed random variables and let $x = x_1 + \cdots + c_t$.
Assume that
\beq\label{eq:thespace1}
\Exp[\1_{A-A}(a-b-x)]  \geq 1-\eps.
\eeq
Set $V = \vspan(\spec_{1/2}(X))^{\perp}$.
Then, for an independent and uniformly distributed $v\in V$, we have
\beq
\Exp[\1_{A-A}(a-b-x+v)] \geq 1-\eps - \frac{1}{2^t}\frac{p^{n}}{|A|}.
\eeq
\end{lemma}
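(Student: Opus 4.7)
The plan is a standard Fourier tail estimate. Fix the convention $\widehat{f}(u)=\E_y[f(y)\,\omega^{\langle u,y\rangle}]$, under which the paper's $\widehat A(u)$ coincides with $\widehat{\rho_A}(u)$, one has $\widehat{\rho_V}(u)=\1_{V^\perp}(u)$ for any subspace $V$, and Fourier inversion reads $f(y)=\sum_u \widehat f(u)\,\omega^{-\langle u,y\rangle}$. Apply inversion to $F:=\1_{A-A}$ and pull the expectation inside. By independence of $a,b,x_1,\dots,x_t,v$ the expectation of the character factors as
$$\E\bigl[\omega^{-\langle u,\,a-b-x+v\rangle}\bigr] \;=\; |\widehat A(u)|^2\,\widehat X(u)^t\,\1_{V^\perp}(u),$$
the last indicator coming from $\E_{v\in V}[\omega^{-\langle u,v\rangle}]=\1_{V^\perp}(u)$. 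Hence
$$\E[\1_{A-A}(a-b-x+v)] \;=\; \sum_{u\in V^\perp} \widehat F(u)\,|\widehat A(u)|^2\,\widehat X(u)^t.$$

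Running the identical computation \emph{without} averaging over $v$ gives the same expression but summed over all $u\in\F_p^n$; so subtracting and invoking~\eqref{eq:thespace1} reduces the lemma to the bound
$$\Bigl|\sum_{u\notin V^\perp} \widehat F(u)\,|\widehat A(u)|^2\,\widehat X(u)^t\Bigr| \;\leq\; \frac{1}{2^t}\cdot\frac{p^n}{|A|}.$$

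This is the key step, and is immediate from the definition of $V$. Since $V^\perp=\vspan(\spec_{1/2}(X))$ contains $\spec_{1/2}(X)$, every $u\notin V^\perp$ lies outside $\spec_{1/2}(X)$ itself and so satisfies $|\widehat X(u)|<1/2$, giving $|\widehat X(u)^t|<2^{-t}$. Pull this factor out, use the trivial bound $|\widehat F(u)|\leq\widehat F(0)=|A-A|/p^n\leq 1$ (from non-negativity of $F$), and extend the sum to all $u$. By Parseval in this normalization, $\sum_u|\widehat A(u)|^2=\|\rho_A\|_2^2=p^n/|A|$, yielding the claimed bound.

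I do not expect a genuine obstacle here: the whole argument is a textbook Fourier manipulation. The only real care needed is in the normalizations (averaging versus summing, and the sign in the character $\omega^{\pm\langle u,\cdot\rangle}$), and in noting that taking the span in the definition of $V$ does not hurt, because inclusion $\spec_{1/2}(X)\subseteq\vspan(\spec_{1/2}(X))=V^\perp$ already suffices for the tail argument.
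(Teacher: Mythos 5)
Your proof is correct and follows essentially the same route as the paper's: Fourier inversion of $\1_{A-A}$, factoring the character expectation by independence, observing that averaging over $v\in V$ inserts the indicator of $V^\perp$, bounding $|\widehat X(u)|^t<2^{-t}$ off $\spec_{1/2}(X)\subseteq V^\perp$ and $|\widehat\1_{A-A}(u)|\le 1$, and finishing with $\sum_u|\widehat A(u)|^2=p^n/|A|$. The paper writes the weight as $\widehat A(u)\widehat A(-u)$ rather than $|\widehat A(u)|^2$, but these coincide since $\1_A$ is real-valued, so the two arguments are the same.
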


\begin{proof}
We will show that 
\beq\label{eq:eq75}
\Big| \Exp[\1_{A-A}(a-b-x) - \Exp[\1_{A-A}(a-b-x +v)]\Big| \leq \frac{p^{n}}{2^t|A|} \;.
\eeq
To this end, we begin by observing that by the Fourier Inversion Formula, the first and second expectation in~\eqref{eq:eq75} can be written as
\begin{align*}
\sum_{u\in \F_p^n}\widehat A(u)\widehat A(-u)  \widehat X(-u)^t\widehat \1_{A-A}(u)
\quad\quad\text{and}\quad\quad
\sum_{u\in \F_p^n}\widehat A(u)\widehat A(-u)\widehat X(-u)^t\widehat V(u)\widehat \1_{A-A}(u),
\end{align*}
respectively.

Since~$V$ is a subspace, $\widehat V(u) = 1$ if~$u\in V^\perp$ and $\widehat V(u) =0$ otherwise.
The left-hand side of~\eqref{eq:eq75} therefore becomes
\beqn
\Big|\sum_{u\not\in V^\perp}\widehat A(u) \widehat A(-u)\widehat X(-u)^t\widehat \1_{A-A}(u)\Big|.
\eeqn
By definition of~$V$ we have $V^\perp = \vspan(\spec_{1/2}(X))$. It follows that if~$u\not\in V^\perp$ then $|\widehat X(-u)| <1/2$. Also it is easy to see that~$|\widehat \1_{A-A}(u)|\leq 1$.
Hence,
\begin{align*}
\Big|\sum_{u\not\in V^\perp}\widehat A(u) \widehat A(-u)\widehat X(-u)^t\widehat \1_{A-A}(u)\Big|
&\leq \frac{1}{2^t} \sum_{u\not\in V^\perp}|\widehat A(u)\widehat A(-u)| \\
&\leq \frac{1}{2^t} \sum_{u\in\F_p^n}|\widehat A(u)| \cdot |\widehat A(-u)|  \\
&\leq \frac{1}{2^t} \cdot \frac{p^n}{|A|} \;,
\end{align*}
which gives the result.
\end{proof}

\begin{proof}[ of Theorem~\ref{thm:subspace-large}]
Recall that~$A$ has size at least $|A| \geq 2^n/L$ for~$L = K^{24} \cdot p$. 
Lemma~\ref{lem:shiftset} says that there exists a set~$X\subseteq\F_p^n$ of size~$|X| \geq p^n/(2L^{100 C \log ^3L})$ such that for any~$t \leq 10\log L$ and any $x \in tX$, we have
$
\Exp_{a,b\in A}[\1_{A-A}(a-b-x)] \geq 0.9$.
Theorem~\ref{thm:chang} and Lemma~\ref{lem:thespace} combined give us a subspace~$V$ of size at least~$|V| \geq p^n/(2^{32}p^{C'\log^4L})$ where~$C' = 3200 C$, such that the left-hand side of~\eqref{eq:thespace1} can be written as and bounded by

\begin{align}\label{eq:average}
\Exp\Big[\Exp_{v\in V}[\1_{A-A}(a-b-x+v)]\Big] \geq 0.8,
\end{align}
where the outer expectation is taken over uniformly distributed $a,b\in A$ and ${x_1,\ldots, x_t \in X}$.

We show that~$V\subseteq 2A-2A$, implying the result.
By the averaging principle, \eqref{eq:average} shows that there exist $a,b\in A$ and $x\in tX$ such that $80\%$ of the~$v\in V$ satisfy~$a-b-x+v\in A-A$.
This implies that $v\in A-A - a+b+x$.
We use the pigeon hole principle to show that this implies that~$V\subseteq 2A-2A$. To this end, fix a~$v\in V\backslash\{0\}$ and notice that there exists a complete matching of $|V|/2$ pairs~$(v_1,v_2)\in V\times V$ such that~$v =v_1 + v_2$ (take any~$v_1\in V$ and set~$v_2 = v-v_1$). This implies that their exists $v_1, v_2 \in A - A - a + b +x$ such that $v_1 + v_2 = v$, which in turn implies that $v \in 2A - 2A$.
\end{proof}

\bibliographystyle{alphaabbrv}
\bibliography{fuzzy}

\end{document}